\documentclass[11pt,a4paper]{article}
\usepackage{hyperref}
\usepackage{amsmath,amsfonts,amssymb,amsthm,mathtools}
\usepackage{authblk}
\usepackage{latexsym,graphicx}
\usepackage{dsfont}
\usepackage{amscd}
\usepackage[dvipsnames]{xcolor}
\usepackage{extarrows} 
\numberwithin{equation}{section} %To label equations per section
\usepackage{yfonts}
\usepackage{authblk}
\usepackage{appendix}

\allowdisplaybreaks

%%%%%%%%%%%%%%%%%%%%%%%%%%%%%%%%%%%%%%
\textheight 23.5cm
\textwidth 16.3cm
\oddsidemargin 0pt
\evensidemargin 0pt
\topmargin -40pt
\jot = .5ex
\setlength{\parskip}{.3cm} 

\newtheorem{theorem}{Theorem}[section]

\newtheorem{lemma}{Lemma}[section]
\newtheorem{proposition}{Proposition}[section]
\theoremstyle{remark}
\newtheorem{remark}{Remark}[section]

%%%%%%%Cauchy principal value integral 
\def\Xint#1{\mathchoice
{\XXint\displaystyle\textstyle{#1}}%
{\XXint\textstyle\scriptstyle{#1}}%
{\XXint\scriptstyle\scriptscriptstyle{#1}}%
{\XXint\scriptscriptstyle\scriptscriptstyle{#1}}%
\!\int}
\def\XXint#1#2#3{{\setbox0=\hbox{$#1{#2#3}{\int}$}
\vcenter{\hbox{$#2#3$}}\kern-.5\wd0}}

\def\pvint{\Xint-}
%%%%%%%%%%%%%%%%%%%%%%%

\newcommand{\wedgecircold}{\mathrel{
\scalebox{0.8}{\hspace{-3.65pt}\setbox0\hbox{$\wedge$}
							\rlap{\hbox to \wd0{\hss \hspace{3.65pt}\raisebox{5.5pt}{$\circ$} \hss}}\box0} }}

\newcommand{\im}{\mathrm{Im} }

\newcommand{\ee}{\mathrm{e}}
\newcommand{\ii}{\mathrm{i}}
\newcommand{\dd}{\mathrm{d}}

\newcommand{\R}{{\mathbb R}}
\newcommand{\C}{{\mathbb C}}
\newcommand{\Z}{{\mathbb Z}}

\newcommand{\ve}{\mathbf{e}}
\newcommand{\vf}{\mathbf{f}}
\newcommand{\vg}{\mathbf{g}}
\newcommand{\vh}{\mathbf{h}}

\newcommand{\vM}{\mathbf{M}}

\newcommand{\cE}{\mathcal{E}}
\newcommand{\cN}{\mathcal {N}}
\newcommand{\cT}{\mathcal{T}}
\newcommand{\cU}{\mathcal{U}}
\newcommand{\cF}{\mathcal{F}}
\newcommand{\cG}{\mathcal{G}}

%Common symbols:
 
\newcommand{\tT}{\tilde{T}}

\newcommand{\mA}{\mathsf{A}}
\newcommand{\mB}{\mathsf{B}}
\newcommand{\mC}{\mathsf{C}}
\newcommand{\mD}{\mathsf{D}}

\newcommand{\mU}{\mathsf{U}}
\newcommand{\mV}{\mathsf{V}}
\newcommand{\mP}{\mathsf{P}}
\newcommand{\mQ}{\mathsf{Q}}

\newcommand{\mM}{\mathsf{M}}
\newcommand{\mF}{\mathsf{F}}
\newcommand{\mG}{\mathsf{G}}
\newcommand{\mX}{\mathsf{X}}

\newcommand{\cA}{\mathcal{A}}

\newcommand{\cV}{\mathcal{V}}
\newcommand{\ocomma}{\,\overset{\circ}{,}\,}

\title{Elliptic soliton solutions of the  \\
spin non-chiral intermediate long-wave equation}
 
\author[1]{Bjorn K. Berntson}
\author[1,2]{Edwin Langmann}
\author[3]{Jonatan Lenells}
\date{\today}

\affil[1]{Department of Physics, KTH Royal Institute of Technology, SE-106 91 Stockholm, Sweden}
\affil[2]{Nordita, KTH Royal Institute of Technology and Stockholm University, SE-106 91 Stockholm, Sweden}
\affil[3]{Department of Mathematics, KTH Royal Institute of Technology, SE-100 44 Stockholm, Sweden}

\begin{document}

\maketitle
 
\begin{abstract}
We construct elliptic multi-soliton solutions of the spin non-chiral intermediate long-wave (sncILW) equation with periodic boundary conditions. 
These solutions are obtained by a spin-pole ansatz including a dynamical background term; we show that this ansatz solves the periodic sncILW equation 
provided the spins and poles satisfy the elliptic $A$-type spin Calogero-Moser (sCM) system with certain constraints on the initial conditions. The key to this result is a B\"acklund transformation for the elliptic sCM system which
 includes a non-trivial dynamical background term. 
We also present solutions of the sncILW equation on the real line and of the spin Benjamin-Ono equation which generalize previously obtained solutions by allowing for a non-trivial background term. 
\end{abstract} 

\section{Introduction}
\label{sec:intro} 
In a recent paper \cite{berntsonlangmannlenells2022}, we introduced and solved new soliton equations related to the $A$-type spin Calogero-Moser (sCM) systems of Gibbons and Hermsen \cite{gibbons1984} (see also \cite{wojciechowski1985}). One of these equations, the spin non-chiral intermediate long wave (sncILW) equation, was shown to have multi-soliton solutions with dynamics described by the hyperbolic sCM system. 
In this paper, we generalize these solutions to the periodic case. More specifically, we construct periodic solutions of the sncILW equation with dynamics described by the elliptic sCM system \cite{krichever1995spin}.
This generalization is non-trivial in several regards; in particular, our solutions include a dynamical background term which, as we show, provides a non-trivial generalization even in the hyperbolic limit when the spatial period becomes infinite.  
 We also present corresponding generalizations of known solutions to the spin Benjamin-Ono (sBO) equation introduced in \cite{berntsonlangmannlenells2022} and, in this way, obtain the full correspondence between sCM models and soliton equations conjectured in \cite{berntsonlangmannlenells2022}. 
 
A prominent feature of the sncILW equation is its nonlocality, which arises through integral operators (see \eqref{eq:sncILW}--\eqref{eq:TTe} below). While soliton equations with this feature have been studied for a long time (the classical examples are the Benjamin-Ono \cite{benjamin1967,ono1975} and intermediate long wave \cite{joseph1977,kubota1978} equations), there has recently been considerable interest in constructing and analyzing novel nonlocal soliton equations; see, for instance, \cite{abanov2009,gerard2010,zhou2015,lenzmann2018,gormley2022,gerard2022,rui2022,takasaki2022}. In this context, in addition to resolving a conjecture posed in \cite{berntsonlangmannlenells2022}, this paper serves to exhibit the sncILW equation with periodic boundary conditions as an interesting system worthy of further study.

Throughout this paper, we denote by $\zeta(z)$ and $\wp(z)$ the usual Weierstrass $\zeta$- and $\wp$-functions of a complex variable $z$ with half-periods $(\ell,\ii\delta)$, with $\ell>0$ and $\delta>0$ fixed parameters and $\ii\coloneqq\sqrt{-1}$ (for the convenience of the reader, we give the definitions of these functions in Appendix~\ref{app:elliptic}). We find it convenient to use the following variants of $\zeta(z)$, 
\begin{equation} 
\label{eq:zetaj}
\begin{split} 
\zeta_1(z)\coloneqq &\; \zeta(z)-\frac{\eta_1}{\ell}z,\quad   \eta_1\coloneqq \zeta(\ell), \\
\zeta_2(z)\coloneqq &\; \zeta(z)-\frac{\eta_2}{\ii\delta}z,\quad   \eta_2\coloneqq \zeta(\ii\delta), 
\end{split} 
\end{equation} 
and the functions 
\begin{equation}\label{eq:wp2f2}
\begin{split}
 \wp_2(z)\coloneqq&  -\zeta_2'(z) = \wp(z)+\frac{\eta_2}{\ii\delta},\\
\varkappa(z)\coloneqq&\; \zeta_2(z)^2-\wp_2(z);
\end{split} 
\end{equation}
see Appendix~\ref{app:elliptic} for details. 
The function $\zeta_1(z)$ is $2\ell$-periodic (but not $2\ii\delta$-periodic) and the function $\zeta_2(z)$ is $2\ii\delta$-periodic (but not $2\ell$-periodic).
Note that $\varkappa(z)$ reduces to a constant in the limits $\ell\to\infty$ and/or $\delta\to\infty$ (as can been seen by evaluating $\varkappa(z)$ in \eqref{eq:wp2f2} using the corresponding degenerations of the functions $\wp_2(z)$ and $\zeta_2(z)$ presented below in \eqref{eq:alpha} and \eqref{eq:V}, respectively). This is one reason why the cases treated in \cite{berntsonlangmannlenells2022} are significantly easier than the elliptic case treated in the present paper.
We also note that $\zeta_2(z)=\zeta_1(z)+\gamma_0 z$ with the constant
\begin{equation}\label{eq:cc}
\gamma_0\coloneqq \frac{\pi}{2\ell\delta}.
\end{equation}
The constant $\gamma_0$ is non-zero only if both $\ell$ and $\delta$ are finite; this is another reason why the elliptic case is more complicated than the cases treated in \cite{berntsonlangmannlenells2022}.

\subsection{Periodic sncILW equation}
For $d$ a fixed positive integer, we denote by $\C^{d\times d}$ the algebra of complex $d\times d$ matrices. 
The periodic sncILW equation describes the time evolution of two $\C^{d\times d}$-valued functions $\mU=\mU(x,t)$ and $\mV=\mV(x,t)$ of $x\in\R$ and $t\in\R$ as follows, 
\begin{equation} 
\label{eq:sncILW} 
\begin{aligned} 
\mU_t &+ \{\mU,\mU_x\} + T\mU_{xx}+\tT \mV_{xx} +\ii [\mU,T\mU_x]+\ii [\mU,\tT \mV_x] =0,\\
\mV_t &- \{\mV,\mV_x\} - T\mV_{xx}-\tT \mU_{xx} +\ii [\mV,T\mV_x]+\ii[\mV,\tT \mU_x] =0,
\end{aligned} 
\end{equation} 
together with the requirement that both functions are $2\ell$-periodic, $\mU(x+2\ell,t)=\mU(x,t)$ and $\mV(x+2\ell,t)=\mV(x,t)$, where $[\cdot,\cdot]$ and $\{\cdot,\cdot\}$ denote the commutator and anti-commutator of square matrices, respectively, and $T$ and $\tT$ are integral operators acting on $2\ell$-periodic functions $f(x)$ of $x\in\R$ as 
\begin{equation}\label{eq:TTe}
\begin{split}
(Tf)(x)\coloneqq &\; \frac{1}{\pi} \pvint_{-\ell}^{\ell} \zeta_1(x'-x) f(x')\,\mathrm{d}x',    \\
(\tilde{T}f)(x) \coloneqq &\; \frac{1}{\pi}\int_{-\ell}^{\ell} \zeta_1(x'-x+\ii\delta)f(x')\,\mathrm{d}x',
\end{split}
\end{equation}
where the dashed integral indicates a principal value prescription and $T$ and $\tilde{T}$ act component-wise on matrix-valued functions. Note that, for $d=1$, the sncILW equation reduces to the periodic non-chiral ILW equation introduced and studied by us in \cite{berntson2020,berntsonlangmann2021}. 

\subsection{Main result}\label{subsec:mainresult}
The solutions of the periodic sncILW equation \eqref{eq:sncILW} that we construct have the form
\begin{equation}\label{eq:ansatz1}
\begin{split}
\mU(x,t)=&\; \ee^{\ii \gamma_0 (\mP +\mP^\dag) t}\mU_0(x,t)\ee^{-\ii \gamma_0 (\mP+\mP^\dag) t},\\
\mV(x,t)=&\; \ee^{\ii \gamma_0 (\mP+\mP^\dag) t}\mV_0(x,t)\ee^{-\ii \gamma_0 (\mP+\mP^\dag) t},
\end{split}
\end{equation}
where 
\begin{equation}\label{eq:ansatz2}
\begin{split}
\mU_0(x,t)=&\;\mM(t)+\ii \sum_{j=1}^N \mP_j(t)   \zeta_2(x-a_j(t)-\ii\delta/2) -\ii \sum_{j=1}^N \mP_j^{\dag}(t)    \zeta_2(x-a_j^*(t)+\ii\delta/2), \\
\mV_0(x,t)=&\; -\mM(t)-\ii \sum_{j=1}^N \mP_j(t)    \zeta_2(x-a_j(t)+\ii\delta/2)+\ii\sum_{j=1}^N \mP_j^{\dag}(t)    \zeta_2(x-a_j^*(t)-\ii\delta/2).
\end{split}
\end{equation}
Here, the time-dependent variables $\mM(t)\in\C^{d\times d}$,  $a_j(t)\in \C$, and $\mP_j(t)\in \C^{d\times d}$ are such that 
\begin{equation}\label{eq:Pdef}
\mP \coloneqq \sum_{j=1}^N \mP_j(t) 
\end{equation}
is time-independent. We refer to $\mM(t)$ as the background and to $a_j(t)$ and $\mP_j(t)$ as pole and spin degrees of freedom, respectively.
Our main result is that, by setting (our notation is explained in Section~\ref{subsec:notation} below)
\begin{equation}
\label{eq:Pj}
\mP_j(t) = |e_j(t)\rangle\langle f_j(t)|
\end{equation}
with $|e_j(t)\rangle$ and $\langle f_j(t)|$ vectors in a $d$-dimensional complex vector space $\cV$ and its dual $\cV^*$, respectively, \eqref{eq:ansatz1}--\eqref{eq:Pdef}  gives an exact solution of the periodic sncILW equation \eqref{eq:sncILW} provided the following conditions are fulfilled: 
\begin{enumerate}
\item[(i)] The dynamical variables $\{a_j,|e_j\rangle,\langle f_j|\}_{j=1}^N =\{a_j(t),|e_j(t)\rangle,\langle f_j(t)|\}_{j=1}^N$ evolve in time according to the following equations, 
\begin{align}
\label{eq:sCMaIV}
\ddot{a}_j=&\;  -4\sum_{k\neq j}^N \langle f_j|e_k\rangle\langle f_k|e_j\rangle \wp_2'(a_j-a_k), \\
\begin{split}\label{eq:sCMefIV}
|\dot{e}_j\rangle = &\;  2\ii\sum_{k\neq j}^N |e_k\rangle\langle f_k|e_j\rangle \wp_2(a_j-a_k)   ,         \\
\langle \dot{f}_j| = &\; -2\ii \sum_{k\neq j}^N \langle f_j|e_k \rangle\langle f_k| \wp_2(a_j-a_k)
\end{split}
\end{align}
for $j=1,\ldots,N$. 

\item[(ii)]  The dynamics of the background $\mM=\mM(t)$ is given by
\begin{equation}
\label{eq:MdotHermitian}
\dot{\mM} =-\frac12\sum_{j=1}^{N} \sum_{k\neq j}^{N} [\mP_j,\mP_k]   \varkappa'(a_j-a_k)+\frac12\sum_{j=1}^{N} \sum_{k\neq j}^{N} [\mP_j^{\dag},\mP_k^{\dag}]   \varkappa'(a_j^*-a_k^*),
\end{equation}
where $\mP_j=\mP_j(t)$.  

\item[(iii)]   At time $t=0$, the following conditions are fulfilled, 
\begin{equation}
\label{eq:sCMconstraint}
\langle e_{j} | f_{j} \rangle=1, 
\end{equation}
\begin{align}
\label{eq:BThermitian}
\dot a_j\langle f_{j}| =&\; 2\langle f_{j}| \mM+2\ii \sum_{k\neq j}^N  \langle f_{j}|\mP_{k} \zeta_2(a_{j}-a_{k}) -2\ii \sum_{k=1}^N  \langle f_{j}| \mP_{k}^{\dag}   \zeta_2(a_{j}-a_{k}^*+\ii\delta), 
\end{align}
\begin{equation}
\label{eq:aj}
-\frac{3\delta}{2}<\im(a_{j})<-\frac{\delta}{2}
\end{equation}
for $j=1,\ldots,N$, together with 
\begin{equation}
\label{eq:mMhermitian}
\mM^{\dag}=\mM
\end{equation}
and
\begin{equation}
\label{eq:mPhermitian}
\mP^{\dag}=\mP.
\end{equation}
\item[(iv)]  The time $t$ is small enough that the poles neither leave the strip defined in \eqref{eq:aj} nor collide (see Theorem~\ref{thm:sncILW} for a more precise formulation; as will be discussed, this is a technical condition needed in our proof but which probably can be ignored). 
\end{enumerate}

Several remarks are in order. 

\begin{enumerate} 
\item One can check that \eqref{eq:Pj}   and \eqref{eq:sCMefIV}  Â indeed   imply that $\mP$  in \eqref{eq:Pdef} is time-independent.
\item It is important to note that \eqref{eq:sCMaIV} and \eqref{eq:sCMefIV} are the time evolution equations of the elliptic sCM model \cite{gibbons1984}. (Note that the elliptic sCM model is usually defined with the standard Weierstrass $\wp$-function $\wp(z)$ instead of $\wp_2(z)$; however, this difference is irrelevant since the system of equations \eqref{eq:sCMaIV}--\eqref{eq:sCMefIV}
is invariant under the transformation 
\begin{equation} 
|e_j(t)\rangle \to \ee^{2\ii c \mP t}|e_j(t)\rangle ,  \quad \langle f_j(t)| \to \langle f_j(t)|\ee^{-2\ii c \mP t},\quad \wp_2(z)\to \wp_2(z)+c
\end{equation} 
with $\mP$ in \eqref{eq:Pdef} Hermitian, for arbitrary $c\in\R$.)

\item We emphasize that the conditions  \eqref{eq:sCMconstraint}--\eqref{eq:mPhermitian} are constraints on initial conditions. If \eqref{eq:sCMconstraint}--\eqref{eq:mPhermitian} are fulfilled at time $t=0$, then the time evolution equations \eqref{eq:sCMaIV}--\eqref{eq:MdotHermitian} guarantee that \eqref{eq:sCMconstraint}--\eqref{eq:mPhermitian} hold true for all times (this is easy to check for \eqref{eq:sCMconstraint}, \eqref{eq:mMhermitian} and \eqref{eq:mPhermitian}, guaranteed by our assumptions for \eqref{eq:aj}, and proved in Proposition~\ref{prop:equiv} in Section~\ref{sec:BT} for \eqref{eq:BThermitian}).

\item While the solutions given above are Hermitian, $\mU=\mU^\dag$ and $\mV=\mV^\dag$, we will actually prove a more general result providing non-Hermitian solutions of the periodic sncILW equation and obtain the Hermitian solutions as a corollary; see Theorem~\ref{thm:sncILW} for the general result. 
We emphasize the Hermitian solutions here since they are easier to state and probably more interesting in physics applications.

\item It not obvious but true that the functions $\mU(x,t)$ and $\mV(x,t)$ in \eqref{eq:ansatz1}--\eqref{eq:Pdef} are $2\ell$-periodic as functions of $x$; to see this, insert $\zeta_2(z)=\zeta_1(z)+\gamma_0 z$  where $\zeta_1(z)$ is $2\ell$-periodic, and observe that the potentially dangerous term $\propto \gamma_0 x$ in \eqref{eq:ansatz2} arising from this insertion vanishes due to the constraint \eqref{eq:mPhermitian}. 
\item We will show in Section~\ref{subsec:multi-solitons} that the constraints  \eqref{eq:sCMconstraint}--\eqref{eq:mMhermitian} can be solved by linear algebra. 

\item For $\ell\to\infty$ and in the special case $\mM=0$, the solutions above reduce to the multi-solition solutions of the sncILW equation on the real line obtained in \cite{berntsonlangmannlenells2022}; it is important to note that one can allow for a non-zero $\mM$ also in the limit $\ell\to\infty$ and, in this way, our solutions here generalize the ones in \cite{berntsonlangmannlenells2022} even in the case $\ell\to\infty$. 

\item The key to our result is a generalization of the B\"acklund transformations of the sCM model in \cite{berntsonlangmannlenells2022} to the elliptic case which is non-trivial since it requires the presence of a non-trivial background $\mM$; with this generalization, we obtain a complete correspondence between sCM models and soliton equations, as anticipated  in \cite{berntsonlangmannlenells2022} (we discuss this point in more detail in Section~\ref{sec:results}). We also mention closely related earlier work on B\"acklund transformations of (s)CM systems \cite{wojciechowski1982,gibbons1983}.

\item The results in this paper add further support to the conjecture in \cite{berntsonlangmannlenells2022} that the periodic sncILW equation is an integrable system. It is important to note that this would be an elliptic integrable system whose analysis indispensably involves elliptic functions, and this presents challenges not present in the limiting cases $\ell\to\infty$ and/or $\delta\to\infty$ treated in \cite{berntsonlangmannlenells2022}. 

\item As discussed, the constraint \eqref{eq:mPhermitian} is required for the solutions \eqref{eq:ansatz1}--\eqref{eq:Pdef} to be $2\ell$-periodic and, in this sense, it can be regarded as a balancing condition. 
We observe that, in the limiting cases  $\ell\to\infty$ and/or $\delta\to\infty$, the constraint \eqref{eq:mPhermitian} can be ignored; otherwise, the result (with non-zero $\mM$) remains true as it stands in these limits (this can be seen by going through our proof of Theorem~\ref{thm:sncILW} and replacing $\varkappa(z)$ by a constant). 
\end{enumerate}

\subsection{Plan of the paper}
\label{subsec:plan} 
In Section~\ref{sec:results}, we formulate and discuss our main result, Theorem~\ref{thm:sncILW}, which provides, in general, non-Hermitian elliptic soliton solutions to the periodic sncILW equation; corresponding results for the sBO equation are also given. 
In Section \ref{section3}, we show that the functions $\mU(x,t)$ and $\mV(x,t)$ in \eqref{eq:ansatz1} solve the periodic sncILW equation \eqref{eq:sncILW} provided that a certain first-order system is satisfied. 
In Section \ref{sec:BT}, we establish a new B\"{a}cklund transformation for the elliptic sCM system.
Using these results, Theorem~\ref{thm:sncILW} is then proved in Section~\ref{section5}. 
In Section~\ref{sec:examples}, we show how to solve the constraints of Theorem~\ref{thm:sncILW} to generate initial data for our $N$-soliton solutions, paying particular attention to the one-soliton case. 
The definitions and functional identities for the Weierstrass elliptic functions that we use are collected in Appendix~\ref{app:special}. 
The proofs of two important lemmas stated in Section \ref{sec:BT} are deferred to Appendix \ref{app:Proofs}.

\subsection{Notation}
\label{subsec:notation}
We follow \cite{gibbons1984} and use the Dirac bra-ket notation \cite{dirac1939} to write our solutions and relate them to the elliptic sCM system. 
In particular, we denote vectors in a $d$-dimensional complex vector space $\cV$ by $|e\rangle$ and vectors in the dual space $\cV^*$ by $\langle f|$. Readers not familiar with this notation can identify $|e\rangle$ with $(e_{\mu})_{\mu=1}^d\in \C^d$, $\langle f|$ with $(f_{\mu}^*)_{\mu=1}^d\in\C^d$ where $*$ is complex conjugation, $\langle f|e\rangle$ with the scalar product $\sum_{\mu=1}^d f_{\mu}^*e_{\mu}$, and $|e\rangle\langle f|$ with the matrix $(e_{\mu}f_{\nu}^*)_{\mu,\nu=1}^d$; in particular, $\cV\otimes\cV^*$ is naturally identified with $\C^{d\times d}$. We denote Hermitian conjugation by $\dag$; note that $|e\rangle^\dag=\langle e|$, $\langle f|^\dag = |f\rangle$ and $(|e\rangle\langle f|)^\dag = |f\rangle\langle e|$ for all $|e\rangle\in\cV$ and $\langle f|\in\cV^*$. 

We use the shorthand notation $\sum_{k\neq j}^N$ for sums $\sum_{k=1,k\neq j}^N$, etc. Dots indicate differentiation with respect to time $t$ and primes indicate differentiation with respect to the argument of the function. 

\section{Results}
\label{sec:results}
As mentioned already in the introduction, the periodic sncILW equation is the elliptic case in a general correspondence between sCM models and soliton equations proposed in \cite{berntsonlangmannlenells2022}. More specifically, there are four cases in this correspondence which, on the sCM side, can be distinguished by the following special functions 
\begin{equation} 
\label{eq:V} 
V(z) \coloneqq \begin{cases} 
1/z^2 & \text{(I: rational case)} \\
(\pi/2\ell)^2/\sin^{2}(\pi z/2\ell)   & \text{(II: trigonometric case)} \\
(\pi/2\delta)^2/\sinh^{2}(\pi z/2\delta)  & \text{(III: hyperbolic case)} \\
\wp_2(z)   & \text{(IV: elliptic case)}
\end{cases}
\end{equation} 
which are the well-known two-body interaction potentials in $A$-type Calogero-Moser systems (see \cite{olshanetsky1981} for review). On the soliton side, the functions 
\begin{equation} 
\label{eq:alpha} 
\alpha(z) \coloneqq \begin{cases} 
1/z & \text{(I: rational case)}\\
(\pi/2\ell)\cot(\pi z/2\ell)  & \text{(II: trigonometric case)}\\
(\pi/2\delta)\coth(\pi z/2\delta)  & \text{(III: hyperbolic case)}\\
\zeta_2(z)  & \text{(IV: elliptic case)}
\end{cases} 
\end{equation} 
are the building blocks in the spin-pole ansatz we use to solve the corresponding soliton equations. Note that $V(z)=-\alpha'(z)$ in all cases. Moreover, the elliptic case (IV) is most general,  and it reduces to the cases I, II and III in the limits $(\ell,\delta)\to(\infty,\infty)$, $\delta\to\infty$ (keeping $\ell$ finite), and $\ell\to\infty$  (keeping $\delta$ finite), respectively. Nevertheless, Cases I--III are interesting in their own right since, first, they are often sufficient in applications, and second, they are significantly simpler and thus allow for more general results that are not directly obtainable as limits of results for Case IV. 

We now give the soliton equations corresponding to Cases I--IV. 
Cases I and II correspond to the sBO equation given by 
\begin{equation} 
\label{eq:sBO} 
\mU_t + \{\mU,\mU_x\} +  H\mU_{xx}  +\ii [\mU,H\mU_x]=0 
\end{equation} 
with a $\C^{d\times d}$-valued function $\mU=\mU(x,t)$ of $x\in\R$ and $t\in\R$, with $H$ the Hilbert transform; Case I corresponds to the sBO equation on the real line where $\mU(x,t)$ has suitable decaying conditions at $x\to\pm\infty$ and 
\begin{equation}
\label{eq:H}  
(Hf)(x)\coloneqq \frac1{\pi}\pvint_{\R} \frac1{x'-x} f(x')\,\dd{x'} , 
\end{equation} 
and Case II corresponds to the periodic sBO equation, where $\mU(x+2\ell,t)=\mU(x,t)$, and with the periodic Hilbert transform 
\begin{equation}
\label{eq:Hp}  
(Hf)(x)\coloneqq \frac1{2\ell}\pvint_{-\ell}^{\ell} \cot\left(\frac{\pi}{2\ell}(x'-x) \right)f(x')\,\dd{x'}  . 
\end{equation} 
Case III corresponds to sncILW equation \eqref{eq:sncILW} on the real line, with functions $\mU(x,t)$ and $\mV(x,t)$ of $x\in\R$ and $t\in\R$ satisfying suitable decaying conditions at $x\to\pm\infty$,  and with the integral operators $T$ and $\tilde{T}$ defined as 
\begin{equation} 
\label{eq:TT}
\begin{split} 
(Tf)(x) &= \frac1{2\delta}\pvint_{\R}\coth\left(\frac{\pi}{2\delta}(x'-x)\right)f(x')\,\dd{x'}, \\
(\tT f)(x) &= \frac1{2\delta}\int_{\R}\tanh\left(\frac{\pi}{2\delta}(x'-x)\right)f(x')\,\dd{x'}. 
\end{split} 
\end{equation} 
Finally, Case IV, which is the most general, corresponds to the periodic sncILW equation \eqref{eq:sncILW} with $T$ and $\tilde{T}$ given by \eqref{eq:TTe}. 
Note that, since $\tT\to 0$ and $T\to H$ in the limit $\delta\to\infty$ \cite{berntsonlangmann2020}, the first equation in \eqref{eq:sncILW} reduces to the sBO equation \eqref{eq:sBO} in this limit. Moreover, $T$ and $\tT$ in \eqref{eq:TTe} reduce to $T$ and $\tT$ in \eqref{eq:TT} in the limit $\ell\to\infty$, and $H$ in \eqref{eq:Hp} reduces to $H$ in \eqref{eq:H} in this limit. 
For future reference, we summarize the discussion in the present paragraph as follows, 
\begin{equation} 
\label{eq:soliton}
\begin{cases} 
\text{sBO equation on the real line} & \text{(I: rational case)}\\
\text{periodic sBO equation} & \text{(II: trigonometric case)}\\
\text{sncILW equation on the real line} & \text{(III: hyperbolic case)}\\
\text{periodic sncILW equation} & \text{(IV: elliptic case)}.
\end{cases} 
\end{equation} 

In this section, we present our solutions of the soliton equations in \eqref{eq:soliton} in all Cases I--IV. The solutions for the Cases I--III that we present are generalizations of solutions obtained already in \cite{berntsonlangmannlenells2022}; the simplest way to prove these generalizations is to adapt the proofs in \cite{berntsonlangmannlenells2022}, as recently done in a thesis by Anton Ottosson \cite{ottosson2022}. This is due to additional constraints appearing in Case~IV which prevent a direct derivation of {\em all}  solutions in Cases~I--III as limits of the general solution in Case~IV. For this reason, and for clarity, the detailed proofs we present in this paper are restricted to Case IV.

\subsection{Solutions of the sncILW equation} 
\label{subsec:IV}
Throughout this subsection, we consider Cases III and IV in \eqref{eq:V}--\eqref{eq:alpha} and \eqref{eq:soliton}; in particular, $V(z)$ and $\alpha(z)$ are as in \eqref{eq:V}  and \eqref{eq:alpha} for the hyperbolic and elliptic cases. Note that  $\varkappa(z)=(\pi/2\delta)^2$ (a constant!) in Case III, while $\varkappa(z)$ is the non-trivial function in \eqref{eq:wp2f2} in Case IV. 

Our general solutions of the sncILW equation (including non-Hermitian ones) are defined in terms of two sets of variables satisfying the time evolution equations of the sCM model: 
\begin{align}
\label{eq:sCMa}
\ddot{a}_j=&\; -4\sum_{k\neq j}^N \langle f_j|e_k\rangle\langle f_k|e_j\rangle V'(a_j-a_k), \\
\label{eq:sCMef}
\begin{split}
|\dot{e}_j\rangle = &\;  2\ii\sum_{k\neq j}^N |e_k\rangle\langle f_k|e_j\rangle V(a_j-a_k)   ,         \\
\langle \dot{f}_j| = &\; -2\ii \sum_{k\neq j}^N \langle f_j|e_k \rangle\langle f_k| V(a_j-a_k)
\end{split}
\end{align}
for $j=1,\ldots,N$ and 
\begin{align}
\label{eq:sCMb}
\ddot{b}_j=&\; -4\sum_{k\neq j}^M \langle h_j|g_k\rangle\langle h_k|g_j\rangle V'(b_j-b_k),  \\
\label{eq:sCMgh}
\begin{split}
|\dot{g}_j\rangle = &\;  2\ii\sum_{k\neq j}^M |g_k\rangle\langle h_k|g_j\rangle V(b_j-b_k)   ,         \\
\langle \dot{h}_j| = &\; -2\ii \sum_{k\neq j}^M \langle h_j|g_k \rangle\langle h_k| V(b_j-b_k)
\end{split}
\end{align}
for $j=1,\ldots,M$ (we use the notation in Section~\ref{subsec:notation}). More specifically, the spin-pole ansatz providing solutions of the sncILW equation \eqref{eq:sncILW}, both in the real-line case (III) and the periodic case (IV), is given in terms of these dynamical variables as follows,  
\begin{equation}\label{eq:ansatzgen}
\begin{split}
\mU(x,t)=&\; \ee^{\ii\gamma_0 (\mP+\mQ) t}\mU_0(x,t)\ee^{-\ii \gamma_0(\mP+\mQ)t},\\ \mV(x,t)=&\; \ee^{\ii\gamma_0 (\mP+\mQ)t}\mV_0(x,t)\ee^{-\ii \gamma_0(\mP+\mQ)t},
\end{split}
\end{equation}
where
\begin{equation}
\label{eq:ansatzgen2}
\begin{split}
\mU_0(x,t)=&\;\phantom- \mM(t)+\ii \sum_{j=1}^N  \mP_j(t)  \alpha(x-a_j(t)-\ii\delta/2) -\ii \sum_{j=1}^{M} \mQ_j(t) \alpha(x-b_j(t)+\ii\delta/2), \\
\mV_0(x,t)=&\; -\mM(t)-\ii \sum_{j=1}^N \mP_j(t)\alpha(x-a_j(t)+\ii\delta/2)+\ii\sum_{j=1}^{M} \mQ_j(t)  \alpha(x-b_j(t)-\ii\delta/2),
\end{split}
\end{equation}
with 
\begin{equation} 
\label{eq:PjQj} 
\mP_j(t) = |e_j(t)\rangle\langle f_j(t)|\quad  (j=1,\ldots,N), \quad \mQ_j(t) = |g_j(t)\rangle\langle h_j(t)| \quad (j=1,\ldots,M)
\end{equation} 
such that 
\begin{equation}\label{eq:PQdef}
\mP=\sum_{j=1}^N \mP_j(t),\qquad \mQ\coloneqq 	\sum_{j=1}^{M} \mQ_j(t)
\end{equation}
both are time-independent, together with an additional variable $\mM(t)\in\C^{d\times d}$ describing a non-trivial background. In Case III, $N$ and $M$ are arbitrary and the variable $\mM$ must be constant, whereas in Case IV, we must have $N=M$ and $\mM$ is necessarily dynamical. The precise statement is as follows. 

\begin{theorem}[Non-Hermitian solutions of the sncILW equation] 
\label{thm:sncILW}
For fixed $N,M\in\Z_{\geq 0}$, let $a_j(t)\in\C$, $|e_j(t)\rangle\in\cV$, $\langle f_j(t)|\in\cV^*$ for $j=1,\ldots,N$, 
$b_j(t)\in\C$, $|g_j(t)\rangle\in\cV$, $\langle h_j(t)|\in\cV^*$ for $j=1,\ldots,M$, and $\mM(t)\in\C^{d\times d}$ be functions of $t\in\R$ satisfying the following conditions: (i) Both sets of variables $\{a_j,|e_j\rangle,\langle f_j|\}_{j=1}^N$ and $\{b_j,|g_j\rangle,\langle h_j|\}_{j=1}^M$  satisfy the time evolution equations of the sCM model:  \eqref{eq:sCMa}--\eqref{eq:sCMef} hold true for $j=1,\ldots,N$, and \eqref{eq:sCMb}--\eqref{eq:sCMgh} hold true for $j=1,\ldots,M$. 
(ii) The background $\mM=\mM(t)$ satisfies  
\begin{equation}
\label{eq:Mdotgen}
\dot{\mM} =-\frac12\sum_{j=1}^{N} \sum_{k\neq j}^{N} [\mP_j,\mP_k]   \varkappa'(a_j-a_k)+\frac12\sum_{j=1}^{M} \sum_{k\neq j}^{M} [\mQ_j,\mQ_k]   \varkappa'(b_j-b_k)
\end{equation}
with $\mP_j=\mP_j(t)$ and $\mQ_j=\mQ_j(t)$ given by \eqref{eq:PjQj}. 
(iii) At time $t=0$, the following conditions are fulfilled: first, 
\begin{equation}
\label{eq:sCMconstraint1}
\langle e_{j} | f_{j} \rangle=1\quad (j=1,\ldots,N), \quad 
\langle g_{j} | h_{j} \rangle=1\quad (j=1,\ldots,M), 
\end{equation}
second,  
\begin{equation}
\label{eq:BT}
\begin{split}
&\dot{a}_j \langle f_j|  = 2\langle f_j| \mM   +2\ii \sum_{k\neq j}^N \langle f_j|e_k\rangle\langle f_k|  \alpha(a_j-a_k) -2\ii \sum_{k=1}^M \langle f_j | g_k\rangle \langle h_k|  \alpha(a_j-b_k+\ii\delta) \quad (j=1,\ldots,N),     \\
&\dot{b}_j |g_j\rangle =  2\mM |g_j\rangle -2\ii \sum_{k\neq j}^M |g_k\rangle\langle h_k|g_j\rangle  \alpha(b_j-b_k) + 2\ii \sum_{k=1}^N |e_k\rangle\langle f_k | g_j\rangle \alpha(b_j-a_k+\ii\delta)\quad (j=1,\ldots,M), 
\end{split}
\end{equation}
and third,
\begin{equation}
\label{eq:ajbj}
-\frac{3\delta}{2}<\im(a_{j})<-\frac{\delta}{2} \quad (j=1,\ldots,N),\quad \frac{\delta}{2}<\im(b_{j})<\frac{3\delta}{2}\quad (j=1,\ldots,M) .  
\end{equation}
Then, in Case III, $\mP$ and $\mQ$ in \eqref{eq:PQdef} are both time-independent, and the ansatz \eqref{eq:ansatzgen}--\eqref{eq:PQdef} gives a solution of the sncILW equation on the real line for all times in the interval $t\in[0,\tau)$ if  $\tau>0$ is such that \eqref{eq:ajbj} and
\begin{equation} 
\label{eq:ajak}
a_j \neq a_k \quad (1\leq j<k\leq N),\quad b_j\neq b_k \quad (1\leq j<k\leq M)
\end{equation} 
hold true for all times $t\in[0,\tau)$; in Case IV, the same holds true provided that $N=M$, the conditions in \eqref{eq:ajak} hold $\bmod \,2\ell$, and the following holds true at time $t=0$, 
\begin{equation} 
\label{eq:constraintPjQj}
\mP=\mQ. 
\end{equation} 
\end{theorem}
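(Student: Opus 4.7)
The plan is to combine two ingredients: the PDE-to-ODE reduction of Section~\ref{section3} and the new elliptic-sCM B\"acklund transformation of Section~\ref{sec:BT}. First, I would note that the similarity transformation in \eqref{eq:ansatzgen} is $x$-independent and that $\mP+\mQ$ is time-independent by \eqref{eq:PQdef} and the sCM flows \eqref{eq:sCMef},\eqref{eq:sCMgh}. Hence conjugation by $\ee^{\ii\gamma_0(\mP+\mQ)t}$ commutes with $\partial_x$ and with $T,\tilde T$ (which act component-wise), and it interacts trivially with the commutators and anti-commutators in \eqref{eq:sncILW}; so $(\mU,\mV)$ solves the sncILW equation if and only if $(\mU_0,\mV_0)$ solves the same equation with $\partial_t$ replaced by $\partial_t+\ii\gamma_0[\mP+\mQ,\,\cdot\,]$. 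Substituting the spin-pole ansatz \eqref{eq:ansatzgen2} into this modified equation, Section~\ref{section3} reduces the PDE to a first-order system on the parameters $\{a_j,\mP_j\}_{j=1}^N$, $\{b_j,\mQ_j\}_{j=1}^M$ and $\mM$ containing the B\"acklund relations \eqref{eq:BT}, evolution equations for the $\mP_j,\mQ_j$ of Lax type, and the background equation \eqref{eq:Mdotgen}.

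Second, I would use the B\"acklund transformation of Section~\ref{sec:BT}, which asserts that this first-order system is equivalent, modulo the normalization \eqref{eq:sCMconstraint1} and the relations \eqref{eq:BT} themselves, to the pair of second-order sCM systems \eqref{eq:sCMa}--\eqref{eq:sCMgh}. Proposition~\ref{prop:equiv} provides the crucial propagation statement: if \eqref{eq:BT} holds at $t=0$ then it holds on $[0,\tau)$, provided \eqref{eq:ajbj} and \eqref{eq:ajak} remain valid throughout; the scalar constraints \eqref{eq:sCMconstraint1} are preserved by a one-line computation using \eqref{eq:sCMef},\eqref{eq:sCMgh}, and $\mP,\mQ$ are each conserved so the identity $\mP=\mQ$ in \eqref{eq:constraintPjQj} holds for all $t$. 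With these in hand, $x$-periodicity of $\mU_0,\mV_0$ in Case~IV follows from $\zeta_2(z+2\ell)=\zeta_2(z)+2\ell\gamma_0$, which produces a contribution proportional to $\mP-\mQ=0$ under $x\to x+2\ell$. The strip condition \eqref{eq:ajbj} is used to keep the singularities of $\alpha(x-a_j-\ii\delta/2)$ and $\alpha(x-b_j+\ii\delta/2)$ off the real line, which is what allows $T$ and $\tilde T$ to be computed explicitly on the ansatz in the derivation of Section~\ref{section3}.

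The hard part will be the B\"acklund compatibility proved in Section~\ref{sec:BT}. One has to differentiate \eqref{eq:BT} in $t$ and show that, after substituting the sCM flows \eqref{eq:sCMa}--\eqref{eq:sCMgh} and the $\varkappa'$-terms in \eqref{eq:Mdotgen}, the result reduces to an identity among $\zeta_2$, $\wp_2$, and $\varkappa$ evaluated at the pole differences $a_j-a_k$ and $a_j-b_k+\ii\delta$. It is precisely these elliptic identities that \emph{force} the background $\mM$ to be dynamical in Case~IV with the specific coefficients appearing in \eqref{eq:Mdotgen}, and this is the novel technical input relative to \cite{berntsonlangmannlenells2022}; once it is established, the remaining tasks — verifying that the strip and non-collision conditions are propagated by continuity of the sCM flow from the initial data \eqref{eq:ajbj}, \eqref{eq:ajak}, and that the Hermitian reduction of Section~\ref{subsec:mainresult} is compatible with \eqref{eq:mMhermitian},\eqref{eq:mPhermitian} — are straightforward consequences of the general statement.
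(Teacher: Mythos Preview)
Your proposal is correct and follows essentially the same architecture as the paper: Proposition~\ref{prop:solitonsgen} reduces the PDE to the first-order system, and Proposition~\ref{prop:equiv} propagates the B\"acklund constraints \eqref{eq:BT} from $t=0$ along the sCM flow, after which the pieces assemble exactly as you describe. The only technical detail you omit is the pole shift $\tilde a_j=a_j-\ii\delta/2$, $\tilde b_j=b_j+\ii\delta/2$ (Lemma~\ref{lem:ajtbjt}), which the paper uses together with the $2\ii\delta$-periodicity of $\alpha$ to pass between the B\"acklund relations \eqref{eq:BT} (with the $+\ii\delta$ shift) and \eqref{eq:BT1} (without it); also note that the strip and non-collision conditions are \emph{assumed} on $[0,\tau)$ rather than propagated.
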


We give various remarks related to this result. 
\begin{enumerate} 
\item 
\label{rem:technical}
Theorem~\ref{thm:sncILW} gives a Hermitian solution, $\mU=\mU^\dag$ and $\mV=\mV^\dag$, if and only if $N=M$ and the initial conditions satisfy the following further constraints,
\begin{equation} 
\label{eq:hermitian} 
\mM=\mM^{\dag},\quad b_j=a_j^*,\quad |e_j\rangle=\langle h_j|^\dag,\quad \langle f_j|= |g_j\rangle^\dag\quad (j=1,\ldots,N),
\end{equation} 
which imply $\mQ_j=\mP_j^\dag$. 
It is easy to check that the reduction \eqref{eq:hermitian} is consistent; moreover, if we impose it at time $t=0$, it is fulfilled at all times. Thus, in Case IV, we obtain the Hermitian solution of the scnILW equation presented in Section~\ref{sec:intro}.

\item It is important to note the following differences between Cases III and IV: 
First, in Case~III, $\gamma_0=0$, and therefore $\mU(x,t)=\mU_0(x,t)$ and $\mV(x,t)=\mV_0(x,t)$. However, in Case IV, the functions $\mU_0(x,t)$ and $\mV_0(x,t)$ given by the spin-pole ansatz \eqref{eq:ansatzgen2} are related to the solutions $\mU(x,t)$ and $\mV(x,t)$ by a time-dependent similarity transformation determined by the total spin $\mP$.
Second,  in Case IV, the background $\mM(t)$ has non-trivial dynamics, while in Case III, \eqref{eq:Mdotgen} simplifies to  $\dot\mM=0$ (since $\varkappa(z)$ reduces to a constant), i.e., the background is constant in time: $\mM(t)=\mM(0)\eqqcolon\mM_0$. Third, in Case IV, we impose the constraint \eqref{eq:constraintPjQj} on the initial conditions, but this constraint is absent in Case III; as discussed in Section~\ref{subsec:mainresult}, this additional constraint in Case IV is required for the $2\ell$-periodicity of $\mU$ and $\mV$ (the argument given there straightforwardly generalizes to the more general case here).

\item In Case III, our solutions \eqref{eq:ansatzgen} obey the boundary conditions 
\begin{equation} 
\label{eq:limits} 
\begin{split} 
\lim_{x\to\pm \infty}\mU(x,t)= -\lim_{x\to\pm \infty}\mV(x,t)=
\ee^{\ii\gamma_0 (\mP+\mQ) t}\bigg(\mM_0 \pm  \frac{\ii\pi}{2\delta}(\mP-\mQ)\bigg)\ee^{-\ii \gamma_0(\mP+\mQ)t}
\end{split} 
\end{equation} 
(this follows from $\lim_{x\to\pm\infty}\alpha(x-a)\to\pm \pi/2\delta$ for all $a\in\C$); thus, the condition \eqref{eq:constraintPjQj} is equivalent to $\mU$ and $-\mV$ being equal to $\ee^{\ii\gamma_0 (\mP+\mQ) t}\mM_0\ee^{-\ii \gamma_0(\mP+\mQ)t}$ at $x\to \pm\infty$; however, there is no need to impose these conditions in Case III. Thus, our solutions suggest that the sncILW equation on the real line is well-defined for the following boundary conditions: 
\begin{equation} 
 \lim_{x\to\pm \infty}\mU(x,t)= -  \lim_{x\to\pm \infty}\mV(x,t) = \ee^{\ii \gamma_0(\mP+\mQ)t}\mM_{\pm \infty}\ee^{-\ii \gamma_0(\mP+\mQ)t},
\end{equation} 
where $\mM_{\pm \infty}$Â  are time-independent and, in general, such that $\mM_{+\infty}\neq  \mM_{-\infty}$.

\item The solutions of the sncILW on the real line (Case III) obtained in \cite{berntsonlangmannlenells2022} correspond to the special case $\mM=0$; note that this specialization is only possible in Case III (in Case IV, it is prevented by the non-trivial dynamics \eqref{eq:Mdotgen}). 

\item It is interesting to note that \eqref{eq:BT}, together with \eqref{eq:sCMef}, \eqref{eq:sCMgh}, \eqref{eq:Mdotgen} and \eqref{eq:constraintPjQj}, is a B\"acklund transformation of the sCM system in the 
 elliptic case (IV); see Section~\ref{sec:BT}  for precise statements. Moreover, in the limits  $(\ell,\delta)\to(\infty,\infty)$, $\delta\to\infty$, and $\ell\to\infty$, this B\"acklund transformation reduces to B\"acklund transformations for the sCM system in Cases I, II and III, respectively; it is important to note that, in Cases I--III, the constraint  \eqref{eq:constraintPjQj} can be omitted; this B\"{a}cklund transformation is a generalization of the B\"acklund transformation of sCM model in Cases I--III obtained in \cite{berntsonlangmannlenells2022}; the latter correspond to the special case $\mM=0$ where, again, this specialization is possible in Cases I--III, but not in Case IV. 
We also remark that, in the elliptic case (IV), a similar B\"{a}cklund transformation for a certain singular limit of the elliptic sCM system, where $d=2$ and $\langle f_j|e_j\rangle=\langle h_j|g_j\rangle =0$, was recently found by one of the authors in collaboration with Klabbers \cite{berntsonklabbers2022}.

\item
The B\"{a}cklund transformation, \eqref{eq:BT} with \eqref{eq:sCMef}, \eqref{eq:sCMgh}, \eqref{eq:Mdotgen} and \eqref{eq:constraintPjQj}, forms an overdetermined system of ordinary equations (ODEs) whose consistency must be established. This was done for the known B\"{a}cklund transformation for the sCM system \cite{gibbons1984} in Cases I--III in \cite{berntson2022consistency} by constructing functions that measure the departure of the B\"{a}cklund transformation from consistency and showing that they obey a system of linear homogeneous ODEs; if the initial data is consistent with the B\"{a}cklund transformation at $t=0$, consistency will be preserved at future times, under mild assumptions. While the approach of \cite{berntson2022consistency} can be straightforwardly generalized to the B\"{a}cklund transformation in Case IV, leading to essentially the same equations (see \cite[Eq.~(2.43)]{berntson2022consistency}), in this paper we take a more streamlined approach, which allows us to show that the relevant quantities obey a system of linear homogenous ordinary differential equations without deriving its precise form.

\item Theorem \ref{thm:sncILW} is stated under the assumption that the poles remain in the strip defined in \eqref{eq:ajbj} and that no pole collisions occur (see \eqref{eq:ajak}). 
As already mentioned, we believe that these assumptions are unnecessary, and we expect that our soliton solutions can be extended to all times $t\in\R$. To support this expectation, we mention the following known results. First, in the sBO case, a Lax pair is known which can be used to prove that the former condition is satisfied for all times \cite{gerard2022};  we hope that it is possible to generalize this argument to the sncILW case. Second, for the scalar Benjamin-Ono equation, it is known that pole collisions occur, but they are no problem for the soliton solutions \cite{gerard2021}. A third reason is recent work by G\'{e}rard and Lenzmann on  multi-soliton solutions to a nonlocal nonlinear Schr\"{o}dinger equation \cite{gerardlenzmann2022} (see also \cite{matsuno2002}) governed by a complexification of the rational CM system; in this work, an explicit example of a two-soliton solution is given where (i) the poles collide and (ii) the solution remains valid during and after the collision. Clearly, it would be interesting to prove the more general result.
\end{enumerate} 

\subsection{Solutions of the sBO equation}

In this section, we consider Cases I and II in \eqref{eq:V}--\eqref{eq:alpha} and \eqref{eq:soliton}; in particular, $V(z)$ and $\alpha(z)$ are as in \eqref{eq:V}  and \eqref{eq:alpha} for the rational and trigonometric cases. 

The spin-pole ansatz for solutions of the sBO equation \eqref{eq:sBO}, both in the real-line and periodic cases, is given by
\begin{equation}
\label{eq:ansatzgensBO}
\mU(x,t)=\;\mM_0+\ii \sum_{j=1}^N  \mP_j(t)  \alpha(x-a_j(t)) -\ii \sum_{j=1}^M \mQ_j(t) \alpha(x-b_j(t)) 
\end{equation}
with $\mP_j(t)$ and $\mQ_j(t)$ as in \eqref{eq:PjQj}; as in the hyperbolic case, we can consistently assume that the background $\mM(t)=\mM_0$ is time-independent. 

\begin{theorem}[Non-Hermitian solutions of the sBO equation] 
\label{thm:sBO}
For fixed $N,M\in\Z_{\geq 0}$ and $\mM_0\in\C^{d\times d}$, let $a_j(t)\in\C$, $|e_j(t)\rangle\in\cV$, $\langle f_j(t)|\in\cV^*$ for $j=1,\ldots,N$ and 
$b_j(t)\in\C$, $|g_j(t)\rangle\in\cV$, $\langle h_j(t)|\in\cV^*$ for $j=1,\ldots,M$, satisfy the following conditions: (i) Both sets of variables $\{a_j,|e_j\rangle,\langle f_j|\}_{j=1}^N=\{a_j(t),|e_j(t)\rangle,\langle f_j(t)|\}_{j=1}^N$ and $\{b_j,|g_j\rangle,\langle h_j|\}_{j=1}^M=\{b_j(t),|g_j(t)\rangle,\langle h_j(t)|\}_{j=1}^M$  satisfy the time evolution equations of the sCM model:  
\eqref{eq:sCMa}--\eqref{eq:sCMef} hold true for $j=1,\ldots,N$ and \eqref{eq:sCMb}--\eqref{eq:sCMgh} hold true for $j=1,\ldots,M$. (ii) At time $t=0$, the constraints \eqref{eq:sCMconstraint1}, \eqref{eq:BT} with $\mM(0)=\mM_0$ and $\delta\to 0$, and
\begin{equation}
\im(a_j)<0 \quad (j=1,\ldots,N), \qquad \im(b_j)>0 \quad (j=1,\ldots,M) 	
\end{equation}
hold. 
Then, in Case I, the ansatz \eqref{eq:ansatzgensBO} with \eqref{eq:PjQj} gives a solution of the sBO equation for all times in the interval $t\in[0,\tau)$ if $\tau>0$ is such that \eqref{eq:ajak} holds for all times $t\in[0,\tau)$. In Case II, the same holds true provided the equalities in \eqref{eq:ajak} hold $\bmod \,2\ell$. 
\end{theorem}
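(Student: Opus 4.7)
The plan is to follow the three-stage strategy developed for Theorem~\ref{thm:sncILW} in the more difficult Case~IV, while exploiting the simplifications available for the sBO equation. First, I would substitute the spin-pole ansatz \eqref{eq:ansatzgensBO} into the sBO equation \eqref{eq:sBO} and simplify using standard functional identities for $\alpha(z)$ in the rational and trigonometric cases, together with the key algebraic fact that the Hilbert transform $H$ acts as multiplication by $+\ii$ on $\alpha(x-a_j)$ when $\im(a_j)<0$, and as multiplication by $-\ii$ on $\alpha(x-b_j)$ when $\im(b_j)>0$, up to additive constants that are absorbed into $\mM_0$. Because $\gamma_0=0$ and $\varkappa$ is constant in Cases~I and II, the ansatz involves no similarity conjugation and satisfies $\dot{\mM}=0$, so the computation is markedly cleaner than in Case~IV.

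Second, I would match terms in the resulting identity by isolating the contributions at each pole $a_j$ and $b_j$. Using the normalizations $\langle e_j|f_j\rangle=1$ and $\langle g_j|h_j\rangle=1$, the highest-order pole coefficients cancel automatically, while the lower-order coefficients yield the first-order form of the sCM equations \eqref{eq:sCMa}--\eqref{eq:sCMgh} for both sets of variables, and the regular part produces the Bäcklund-type relations \eqref{eq:BT} (with $\mM=\mM_0$ and $\delta\to 0$) that couple the two sectors to the constant background. These latter relations are imposed only as constraints on the initial data but must be shown to hold along the entire flow $t\in[0,\tau)$.

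Third, I would verify the propagation of these Bäcklund constraints by the streamlined argument of Section~\ref{sec:BT} adapted from Case~IV: introduce error quantities measuring the failure of each equation in \eqref{eq:BT} at a general time $t$, differentiate along the sCM flow, and show that they satisfy a system of linear homogeneous ordinary differential equations, so that vanishing at $t=0$ forces vanishing for all $t\in[0,\tau)$. The main obstacle is precisely this final compatibility step: the Bäcklund system is overdetermined, and its preservation rests on delicate cancellations among the values of $\alpha$, $\alpha'$, and their sums at pole differences, particularly in the periodic Case~II where the addition formula for cotangent must be invoked repeatedly. However, the absence of the constraint $\mP=\mQ$, the non-dynamical background, and the trivial similarity factor make the bookkeeping far lighter than in Case~IV, and the proof should go through with only cosmetic modifications of the Case~IV argument.
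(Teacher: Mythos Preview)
Your proposal is correct and matches the paper's approach: the paper does not give a detailed proof of Theorem~\ref{thm:sBO} but explicitly states that Cases I--III are handled by adapting the Case~IV argument (as carried out in \cite{ottosson2022}), which is precisely the three-stage strategy you outline. One minor imprecision: what emerges from matching pole coefficients is not directly the sCM equations \eqref{eq:sCMa}--\eqref{eq:sCMgh} but rather the first-order system analogous to \eqref{eq:ajdotsolgen}--\eqref{eq:Pj2solgen}, with the B\"acklund relations arising from the coefficients of $\alpha'(x-a_j)$ rather than from a ``regular part''; the link to the second-order sCM flow is then made via the propagation argument you describe in your third step.
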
 

\section{From the sncILW equation to a first-order system}\label{section3}
In this section, we establish conditions under which the ansatz \eqref{eq:ansatzgen} solves the periodic sncILW equation. 
Throughout this section, $V(z)=\wp_2(z)$, $\alpha(z)=\zeta_2(z)$, and $M=N$.

\begin{proposition}\label{prop:solitonsgen}
The functions $\mU(x,t)$ and $\mV(x,t)$ in \eqref{eq:ansatzgen} satisfy the periodic sncILW equation \eqref{eq:sncILW} provided that the equations \eqref{eq:Mdotgen},
\begin{equation}\label{eq:ajdotsolgen}
\begin{split}
\dot{a}_j \mP_j=&\; 2\mP_j\mM +2\ii\sum_{k\neq j}^N \mP_j\mP_k \alpha(a_j-a_k)-2\ii\sum_{k=1}^N \mP_j\mQ_k\alpha(a_j-b_k+\ii\delta), \\
\dot{b}_j \mQ_j=&\; 2\mM \mQ_j -2\ii\sum_{k\neq j}^N \mQ_k\mQ_j \alpha(b_j-b_k)+2\ii\sum_{k=1}^N\mP_k\mQ_j\alpha(b_j-a_k+\ii\delta) 
\end{split}\quad (j=1,\ldots,N),
\end{equation}
and
\begin{equation}\label{eq:Pjdotsolgen}
\begin{split}
\dot{\mP}_j=&\; -2\ii \sum_{k\neq j}^N[\mP_j,\mP_k]V(a_j-a_k), \\
\dot{\mQ}_j=&\; -2\ii \sum_{k\neq j}^N [\mQ_j,\mQ_k]V(b_j-b_k)  
\end{split} \quad (j=1,\ldots,N),
\end{equation}
are satisfied and the constraints \eqref{eq:ajbj}--\eqref{eq:constraintPjQj} and
\begin{equation}\label{eq:Pj2solgen}
\mP_j^2=\mP_j, \quad \mQ_j^2=\mQ_j \quad (j=1,\ldots,N)
\end{equation}
are fulfilled.
\end{proposition}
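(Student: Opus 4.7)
The strategy is direct substitution of the ansatz \eqref{eq:ansatzgen}--\eqref{eq:PQdef} into the sncILW equation \eqref{eq:sncILW}, followed by a pole-by-pole matching argument in $x$. First, I would strip off the similarity transformation: since $\mP$ and $\mQ$ are time-independent and conjugation by $\ee^{\ii\gamma_0(\mP+\mQ)t}$ commutes with $\partial_x$, $T$, and $\tilde T$, the $\mU$-equation of \eqref{eq:sncILW} is equivalent, after undoing the conjugation, to
\begin{equation*}
\mU_{0,t}+\ii\gamma_0[\mP+\mQ,\mU_0]+\{\mU_0,\mU_{0,x}\}+T\mU_{0,xx}+\tilde T\mV_{0,xx}+\ii[\mU_0,T\mU_{0,x}]+\ii[\mU_0,\tilde T\mV_{0,x}]=0,
\end{equation*}
and similarly for the $\mV$-equation. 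The additional commutator term on the left is what compensates the non-periodicity of $\zeta_2(z)=\zeta_1(z)+\gamma_0 z$ in the ansatz: the only linear-in-$x$ contribution to $\mU_0$ has coefficient $\ii\gamma_0(\mP-\mQ)$, which vanishes by \eqref{eq:constraintPjQj}, confirming that $\mU$ and $\mV$ are genuinely $2\ell$-periodic.

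Next I would evaluate the action of $T$ and $\tilde T$ on the pole terms in the ansatz. The initial-condition hypothesis \eqref{eq:ajbj}, preserved for $t\in[0,\tau)$ by assumption (iv), places the poles $a_j+\ii\delta/2$ of $\mU_0$ in the lower half-strip $-\delta<\im z<0$ and the poles $b_j-\ii\delta/2$ in the upper half-strip $0<\im z<\delta$. A contour-deformation argument using the definition \eqref{eq:TTe}, together with the periodicity and pole structure of $\zeta_1$, then yields the elliptic analogue of the projection property of the Hilbert transform: $T$ and $\tilde T$ applied to $\zeta_2(x-c)$ return linear combinations of $\zeta_2(x-c)$ and $\zeta_2(x-c\pm\ii\delta)$ whose coefficients $\pm\ii$ track which half-strip contains $c$. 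These are the elliptic generalizations of the identities used in \cite{berntsonlangmannlenells2022} for the hyperbolic case, and constitute the key technical input of this section.

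After inserting these identities, the left-hand side of the reduced equation becomes a meromorphic function of $x\bmod 2\ell$ whose only singularities are simple or double poles at $a_j+\ii\delta/2$ and $b_j-\ii\delta/2$. Equating it to zero amounts to requiring that its double-pole, simple-pole, and analytic ($x$-independent) parts each vanish separately. The double-pole coefficients collapse using the rank-one constraints $\mP_j^2=\mP_j$ and $\mQ_j^2=\mQ_j$ of \eqref{eq:Pj2solgen}. Matching simple-pole coefficients at $a_j+\ii\delta/2$, after reducing products $\zeta_2(x-a_j-\ii\delta/2)\zeta_2(x-a_k-\ii\delta/2)$ by standard partial-fraction identities relating $\zeta_2$ and $\wp_2$, produces exactly the ODEs \eqref{eq:ajdotsolgen} and \eqref{eq:Pjdotsolgen}; matching at $b_j-\ii\delta/2$ yields their $b$-counterparts. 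Finally, matching the analytic part produces \eqref{eq:Mdotgen}, with the function $\varkappa(z)=\zeta_2(z)^2-\wp_2(z)$ from \eqref{eq:wp2f2} emerging from the combination of square and derivative contributions, and with the commutator term $\ii\gamma_0[\mP+\mQ,\mU_0]$ contributing a piece that combines with the $[\mP_j,\mP_k]$-terms to give precisely $\dot\mM$.

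The main obstacle will be the bookkeeping at the simple-pole level: one must carefully combine the anticommutator contributions from $\{\mU_0,\mU_{0,x}\}$ with the commutator contributions from $\ii[\mU_0,T\mU_{0,x}]$ and $\ii[\mU_0,\tilde T\mV_{0,x}]$, and verify that the cross-type terms involving $\mP_j\mQ_k$ with argument $\alpha(a_j-b_k+\ii\delta)$ arise precisely from the $\tilde T$-contributions (which effectively shift arguments by $\ii\delta$). The second equation of \eqref{eq:sncILW} does not require an independent calculation: the ansatz \eqref{eq:ansatzgen2} exhibits a symmetry under $(\mU_0,\mV_0)\leftrightarrow(-\mV_0,-\mU_0)$ combined with $(a_j,\mP_j)\leftrightarrow(b_j,\mQ_j)$ and $\delta\leftrightarrow-\delta$, under which the sncILW system is also invariant, so the $\mV$-equation follows from the $\mU$-equation essentially by inspection.
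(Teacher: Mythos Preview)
Your strategy matches the paper's: strip the similarity transformation to get a reduced equation for $\mU_0,\mV_0$ carrying an extra commutator $\ii\gamma_0[\mP+\mQ,\,\cdot\,]$, compute the action of $T,\tilde T$ on the pole building blocks, substitute, and match coefficients. The paper packages $\mU$ and $\mV$ into a two-vector $\cU_0$ acted on by a combined operator $\cT$, whereas you use the $(\mU_0,\mV_0)\leftrightarrow(-\mV_0,-\mU_0)$ symmetry to reduce to one equation; this is only a bookkeeping choice.

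There are, however, several inaccuracies that touch on exactly the features distinguishing the elliptic case from the hyperbolic one treated in \cite{berntsonlangmannlenells2022}. First, pole orders: $\mU_0$ has \emph{simple} poles, so $\{\mU_0,\mU_{0,x}\}$ and $T\mU_{0,xx}$ each contribute \emph{triple} poles, and it is the triple-pole coefficient $\mP_j^2-\mP_j$ that is killed by \eqref{eq:Pj2solgen}; the double and simple poles then yield \eqref{eq:ajdotsolgen} and \eqref{eq:Pjdotsolgen} respectively. Second, $T$ and $\tilde T$ act only on $2\ell$-periodic functions, so they cannot be applied to $\zeta_2(x-c)$ itself; one must work with derivatives. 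The paper's Lemma~\ref{lem:cT} does this by splitting $\alpha'(x-c)=-\wp_1(x-c)+\gamma_0$ and obtains $\cT\cA'_{r_j}(\cdot-a_j)=\ii r_j\cA'_{r_j}(x-a_j)$ \emph{plus an extra constant proportional to $\gamma_0$}: it is precisely this constant, fed through $\ii[\mU_0,T\mU_{0,x}+\tilde T\mV_{0,x}]$, that cancels the commutator $\ii\gamma_0[\mP+\mQ,\mU_0]$ from the similarity transformation---so that term does \emph{not} feed into $\dot\mM$ as you suggest. Third, the product identity (the paper's \eqref{eq:Id4}) for $\alpha(x-a_j)\alpha'(x-a_k)$ generates, beyond $\alpha$- and $V$-terms, a piece $\tfrac12\varkappa'(a_j-a_k)$ constant in $x$ (this is the actual source of \eqref{eq:Mdotgen}) and a piece $\tfrac12\varkappa'(x-a_k\mp\ii\delta/2)$ that is analytic at the expected poles but not $2\ell$-periodic. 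The paper collects these into separate ``$\cF'_{r_j}$'' coefficients and shows they vanish using \emph{both} $\mP_j^2=\mP_j$ and $\mP=\mQ$. Your Laurent-on-the-torus scheme would absorb this into the step ``the analytic remainder is constant'', but you should be aware that this is where the constraint \eqref{eq:constraintPjQj} is needed a second time (beyond periodicity of $\mU_0$), and it is the genuinely new elliptic obstacle.
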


\begin{proof}

The proof is facilitated by introducing the notation
\begin{equation}\label{eq:circ}
\left(\begin{array}{c} F_1 \\ F_2 \end{array}\right)\circ\left(\begin{array}{c} G_1 \\ G_2 \end{array}\right)\coloneqq \left(\begin{array}{c}
F_1G_1 \\ -F_2G_2
\end{array}
\right)
\end{equation}
for $\C$-valued functions $F_j,G_j$ ($j=1,2$) and the operator 
\begin{equation}\label{eq:cT}
\cT\coloneqq \left(\begin{array}{cc}
T & \tilde{T} \\
-\tilde{T} & -T
\end{array}\right),
\end{equation}
interpreted as a linear operator on vector-valued functions, see \cite{berntson2020}. In the present paper, we use the product $\circ$ defined in \eqref{eq:circ} also for vectors $\cF,\cG$ whose components $\mF_j,\mG_j$ are in $\C^{d\times d}$, and we let
\begin{equation}\label{eq:commutators}
 [\cF\ocomma \cG]\coloneqq \cF\circ\cG- \cG\circ \cF ,\qquad \{\cF\ocomma \cG\}\coloneqq \cF\circ\cG+\cG\circ \cF
\end{equation}
be the corresponding generalizations of the commutator and anti-commutator, respectively. With this notation, the periodic sncILW equation \eqref{eq:sncILW} can be written as
\begin{equation}\label{eq:sncILW2}
\cU_t+\{\cU\ocomma \cU_x\}+\cT\cU_{xx}+\ii[\cU\ocomma \cT\cU_x]=0. 
\end{equation}
Using the shorthand notation
\begin{equation}\label{eq:shorthandsolitons}
(a_j, |e_j\rangle,\langle f_j|,\mP_j ,r_j)\coloneqq \begin{cases}
(a_j, |e_j\rangle,\langle f_j| ,\mP_j ,+1)& j=1,\ldots,N, \\
(b_{j-N}, |g_{j-N}\rangle,\langle h_{j-N}| ,\mQ_{j-N} ,-1)& j=N+1,\ldots,\cN,
\end{cases} \qquad \cN\coloneqq 2N,
\end{equation}
we write the ansatz \eqref{eq:ansatzgen} as
\begin{equation}\label{eq:ansatzSH}
\cU(x,t)= \ee^{\ii\gamma_0(\mP+\mQ)t}\cU_0(x,t)	\ee^{-\ii\gamma_0(\mP+\mQ)t},
\end{equation}
with
\begin{equation}\label{eq:ansatzSH0}
 \cU_0(x,t)= \mM(t)\cE+ \ii\sum_{j=1}^{\cN} r_j \mP_j(t)\cA_{r_j}(x-a_j(t)),
\end{equation}
where
\begin{equation}\label{eq:EA}
\cE \coloneqq \left(\begin{array}{c}
1 \\
-1
\end{array}\right),\qquad 
\cA_{\pm}(z)\coloneqq\left( \begin{array}{c}
+\alpha(z\mp\ii\delta/2) \\
-\alpha(z\pm\ii\delta/2)
\end{array}\right).
\end{equation}

It is important to note that, provided \eqref{eq:Pjdotsolgen} holds, the quantities $\mP$ and $\mQ$ defined in \eqref{eq:PQdef} and appearing in \eqref{eq:ansatzSH} are conserved quantities, 
\begin{subequations}\label{eq:Pj-Qjdot}
\begin{equation}
\dot{\mP} =  \sum_{j=1}^N \dot{\mP}_j
= -2\ii \sum_{j=1}^N\sum_{k\neq j}^{N} [\mP_j,\mP_k]V({a}_j-{a}_k)=0
\end{equation}
and
\begin{equation}
\dot{\mQ}=\sum_{j=1}^N \dot{\mQ}_j=-2\ii  \sum_{j=1}^N\sum_{k\neq j}^{N} [\mQ_j,\mQ_k]V({b}_j-{b}_k)=0,
\end{equation}
\end{subequations}
using the anti-symmetry of the commutator and the fact that $V(z)$ is an even function \eqref{eq:parity}. Thus, assuming \eqref{eq:Pjdotsolgen}, it follows from \eqref{eq:sncILW2} and \eqref{eq:ansatzSH}--\eqref{eq:ansatzSH0} that $\cU$ satisfies \eqref{eq:sncILW2} if and only if $\cU_0$ satisfies
\begin{equation}\label{eq:sncILW3}
\cU_{0,t}+\ii\gamma_0[(\mP+\mQ)\cE\ocomma\cU_0]+\{\cU_0\ocomma \cU_{0,x}\}+\cT\cU_{0,xx}+\ii[\cU_0\ocomma \cT\cU_{0,x}]=0.
\end{equation}

We compute each term in \eqref{eq:sncILW3} with $\cU_0$ given by \eqref{eq:ansatzSH0}. We start with
\begin{equation}\label{eq:Ut}
\cU_{0,t}=\dot{\mM}\cE+\ii\sum_{j=1}^{\cN} r_j \big(\dot{\mP}_j \cA_{r_j}(x-a_j)-\mP_j \dot{a}_j\cA_{r_j}'(x-a_j)\big).
\end{equation}
Next, we compute
\begin{align}\label{eq:UUx1}
\{\cU_0\ocomma\cU_{0,x}\}=&\; \ii \sum_{j=1}^{\cN}r_j  \{\mM,\mP_j\} \cE\circ \cA_{r_j}'(x-a_j)    -\sum_{j=1}^{\cN}\sum_{k=1}^{\cN} r_jr_k \{\mP_j,\mP_k\}\cA_{r_j}(x-a_j)\circ\cA_{r_k}'(x-a_k) \nonumber \\
=&\;  \ii \sum_{j=1}^{\cN}r_j  \{\mM,\mP_j\} \cA_{r_j}'(x-a_j)-2\sum_{j=1}^{\cN} \mP_j^2 \cA_{r_j}(x-a_j)\circ\cA_{r_j}'(x-a_j) \nonumber \\
&\; -\sum_{j=1}^{\cN}\sum_{k \neq j}^{\cN} r_jr_k \{\mP_j,\mP_k\} \cA_{r_j}(x-a_j)\circ\cA_{r_k}'(x-a_k).
\end{align}
To proceed, we need the identities
\begin{equation}\label{eq:Id3}
2\cA_{r_j}(x-a_j)\circ\cA_{r_j}'(x-a_j)=-\cA_{r_j}''(x-a_j)+\cF_{r_j}'(x-a_j)
\end{equation}
and
\begin{align}\label{eq:Id4}
\cA_{r_j}(x-a_j)\circ\cA_{r_k}'(x-a_k)=&\;  -\alpha(a_j-a_k+\ii(r_j-r_k)\delta/2)\cA_{r_k}'(x-a_k)\nonumber \\
&\; -V(a_j-a_k+\ii(r_j-r_k)\delta/2)\big(\cA_{r_j}(x-a_j)-\cA_{r_k}(x-a_k)\big) \nonumber \\
&\;+\frac12\cF_{r_k}'(x-a_k) +\frac12 \varkappa'(a_j-a_k+\ii(r_j-r_k)\delta/2)\cE,
\end{align}
where
\begin{equation}\label{eq:F}
\cF_{\pm}(z)\coloneqq \left(\begin{array}{c} \varkappa(z\mp\ii\delta/2) \\ -\varkappa(z\pm\ii\delta/2)   \end{array}\right).
\end{equation}
The first identity \eqref{eq:Id3} can be obtained by differentiating \eqref{eq:Id1} with respect to $z$ and setting $z=x-a_j \pm r_j\ii\delta/2$ while the second identity \eqref{eq:Id4} can be obtained by differentiating \eqref{eq:Id2} with respect to $c$, setting $a = x$, $b=a_j \pm r_j\ii\delta/2$, and $c=a_k \pm r_k\ii\delta/2$, and using the periodicity property $\alpha(z+2\ii\delta)=\alpha(z)$ and the fact that $\alpha(z)$ is an odd function. Inserting \eqref{eq:Id3} and \eqref{eq:Id4} into \eqref{eq:UUx1} gives
\begin{align}
&\{\cU_0\ocomma\cU_{0,x}\}=  \ii \sum_{j=1}^{\cN}r_j  \{\mM,\mP_j\} \cA_{r_j}'(x-a_j)+\sum_{j=1}^{\cN} \mP_j^2 \cA_{r_j}''(x-a_j)   -\sum_{j=1}^{\cN} \mP_j^2 \cF_{r_j}'(x-a_j) \nonumber \\
& +\sum_{j=1}^{\cN}\sum_{k \neq j}^{\cN} r_jr_k \{\mP_j,\mP_k\}\alpha(a_j-a_k+\ii(r_j-r_k)\delta/2)\cA_{r_k}'(x-a_k) \nonumber  \\
& +\sum_{j=1}^{\cN}\sum_{k \neq j}^{\cN} r_jr_k \{\mP_j,\mP_k\} V(a_j-a_k+\ii(r_j-r_k)\delta/2)\big(\cA_{r_j}(x-a_j)-\cA_{r_k}(x-a_k)\big) \nonumber \\
& -\frac12 \sum_{j=1}^{\cN}\sum_{k\neq j}^{\cN} r_j r_k\{\mP_j,\mP_k\}\cF_{r_k}'(x-a_k)- \frac12\sum_{j=1}^{\cN}\sum_{k\neq j}^{\cN}r_j r_k \{\mP_j,\mP_k\} \varkappa'(a_j-a_k+\ii(r_j-r_k)\delta/2)\cE.
\end{align}
The double sum in the third line and the second double sum in the fourth line vanish by symmetry. Hence, after relabelling summation indices $j\leftrightarrow k$ in the double sum in the second line and rearranging, we are left with
\begin{align}\label{eq:UUx2}
\{\cU_0\ocomma\cU_{0,x}\}=&\;  \ii \sum_{j=1}^{\cN}r_j  \{\mM,\mP_j\} \cA_{r_j}'(x-a_j)+\sum_{j=1}^{\cN} \mP_j^2 \cA_{r_j}''(x-a_j) \nonumber \\
&\; -\sum_{j=1}^{\cN}\sum_{k \neq j}^{\cN} r_jr_k \{\mP_j,\mP_k\} \alpha(a_j-a_k+\ii(r_j-r_k)\delta/2)\cA_{r_j}'(x-a_j) \nonumber  \\
&\; -\sum_{j=1}^{\cN} \mP_j^2 \cF_{r_j}'(x-a_j)        -\frac12 \sum_{j=1}^{\cN}\sum_{k\neq j}^{\cN} r_j r_k \{\mP_j,\mP_k\}\cF_{r_k}'(x-a_k).
\end{align}

To compute terms involving $\cT$, we need the following lemma.

\begin{lemma}\label{lem:cT}
The operator $\cT$ defined in \eqref{eq:cT} has the following action on the functions $\cA_{r_j}'(x-a_j)$
\begin{equation}\label{eq:cTApmprime}
(\cT \cA_{r_j}'(\cdot-a_j))(x)= \ii r_j \cA_{r_j}'(x-a_j) 
+(1+r_j)\ii \gamma_0 \left(\begin{array}{c} 0 \\ 1 \end{array}\right)+(1-r_j)\ii \gamma_0\left(\begin{array}{c} 1 \\ 0 \end{array}\right)
\quad (j=1,\ldots,\cN)
\end{equation}
provided \eqref{eq:ajbj} holds. 
\end{lemma}
\begin{proof}

Using the definitions \eqref{eq:EA} of $\cA_{\pm}(z)$ and recalling that $\zeta_2(z)=\zeta_1(z)+\gamma_0 z$ with $\gamma_0$ given by \eqref{eq:cc}, we write
\begin{equation}\label{eq:Apmalt}
\cA_{r_j}(x-a_j)=\left(\begin{array}{c}
+\zeta_1(x-a_j-\ii r_j\delta/2) \\ -\zeta_1(x-a_j+\ii r_j \delta/2)
\end{array}	\right) +\gamma_0 \left(\begin{array}{c} x-a_j-\ii r_j\delta/2 \\ -(x-a_j+ \ii r_j\delta/2)\end{array}\right).
\end{equation}
By differentiating \eqref{eq:Apmalt} with respect to $x$, we obtain
\begin{equation}\label{eq:Arjprime}
\cA_{r_j}'(x-a_j)=\left(\begin{array}{c}
-\wp_1(x-a_j-\ii r_j\delta/2) \\ +\wp_1(x-a_j+\ii r_j \delta/2)
\end{array}	\right) +\gamma_0 \left(\begin{array}{c} 1 \\ -1\end{array}\right),
\end{equation}
where $\wp_1(z)\coloneqq -\zeta_1'(z)$ is a $2\ell$-periodic, zero-mean function (see Appendix \ref{app:elliptic}). We compute the action of $\cT$ on the first and second terms in \eqref{eq:Arjprime} separately.

First, we use the following result \cite[Appendix~2.a]{berntson2020}: for a $2\ell$-periodic, zero-mean function $f(z)$ analytic in a strip $-A<\im(z)<A$ for some $A>\delta/2$, the vector-valued functions $(f(x\mp\ii\delta/2,-f(x\pm\ii\delta/2))^T$
are eigenfunctions of the $\cT$ operator with eigenvalues $\pm \ii$. Applied to the functions $f(z)=\wp_1(z-a_j)$, this gives
\begin{align}\label{eq:cTwp1}
\bigg(\cT\left(\begin{array}{c}
-\wp_1(\cdot-a_j-\ii r_j\delta/2) \\ +\wp_1(\cdot-a_j+\ii r_j \delta/2)
\end{array}	\right)\bigg)(x)
=&\; \ii r_j\left(\begin{array}{c}
-\wp_1(x-a_j-\ii r_j\delta/2) \\ +\wp_1(x-a_j+\ii r_j \delta/2)
\end{array}	\right) \nonumber \\
=&\; \ii r_j\cA_{r_j}'(x-a_j)-\ii r_j \gamma_0 \left(\begin{array}{c} 1\\ -1 \end{array}\right),	
\end{align}
where we have used \eqref{eq:Arjprime} in the second step.

Second, we recall from \cite[Appendix~B]{berntsonlangmann2021} that
\begin{equation}
T(1)=0,\qquad \tilde{T}(1)=-\ii 	
\end{equation}
and hence,
\begin{equation}\label{eq:cTconstant}
\cT\left(\begin{array}{c} 1 \\ -1 \end{array}\right)= \ii \left(\begin{array}{c} 1 \\ 1 \end{array}\right). 
\end{equation}

By applying $\cT$ to \eqref{eq:Arjprime} and using  and \eqref{eq:cTwp1} and \eqref{eq:cTconstant}, we obtain \eqref{eq:cTApmprime}. 
\end{proof}

Using \eqref{eq:ansatzSH0}, Lemma \ref{lem:cT}, and \eqref{eq:constraintPjQj} in the form 
\begin{equation}\label{eq:sumconstraint4}
\sum_{j=1}^{\cN} r_j \mP_j=0,
\end{equation}
we obtain
\begin{equation}\label{eq:TUx}
\cT\cU_{0,x}=-\sum_{j=1}^{\cN} \mP_j \big(\cA_{r_j}'(x-a_j) - \gamma_0\cE\big),\qquad \cT\cU_{0,xx}=-\sum_{j=1}^{\cN} \mP_j \cA_{r_j}''(x-a_j),
\end{equation}
where we have used the fact that $\mathcal{T}$ commutes with differentiation \cite{berntsonlangmann2021} to obtain the second identity.

From \eqref{eq:ansatzSH0} and the first equation in \eqref{eq:TUx}, we compute
\begin{align}\label{eq:UTUx1}
\ii[\cU_0\ocomma \cT\cU_{0,x}]=&\; -\ii \sum_{j=1}^{\cN} [\mM,\mP_j]\cE\circ \cA_{r_j}'(x-a_j)+\sum_{j=1}^{\cN}\sum_{k\neq j}^{\cN} r_j [\mP_j,\mP_k]\cA_{r_j}(x-a_j)\circ \cA_{r_k}'(x-a_k) \nonumber \\
&\; +\ii \gamma_0 [\cU_0\ocomma (\mP+\mQ)\cE]     \nonumber \\
=&\; -\ii \sum_{j=1}^{\cN} [\mM,\mP_j] \cA_{r_j}'(x-a_j)-\sum_{j=1}^{\cN}\sum_{k\neq j}^{\cN} r_j [\mP_j,\mP_k]\alpha(a_j-a_k+\ii(r_j-r_k)\delta/2)\cA_{r_k}'(x-a_k) \nonumber \\
&\; -\sum_{j=1}^{\cN}\sum_{k\neq j}^{\cN} r_j [\mP_j,\mP_k]V(a_j-a_k+\ii(r_j-r_k)\delta/2)\big(\cA_{r_j}(x-a_j)-\cA_{r_k}(x-a_k)\big) \nonumber \\
&\; +\frac12 \sum_{j=1}^{\cN}\sum_{k\neq j}^{\cN} r_j [\mP_j,\mP_k]\cF_{r_k}'(x-a_k)
 + \frac12\sum_{j=1}^{\cN}\sum_{k\neq j}^{\cN}r_j [\mP_j,\mP_k] \varkappa'(a_j-a_k+\ii(r_j-r_k)\delta/2)\cE \nonumber \\
 &\; -\ii \gamma_0 [ (\mP+\mQ)\cE\ocomma\cU_0],
\end{align}
where we have employed \eqref{eq:Id4} and used the anti-symmetry of the generalized commutator \eqref{eq:commutators} in the second step. Since $V(z)$ is an even function, we can rewrite the double sum in the second line of the right-hand side as follows:
\begin{multline}
\sum_{j=1}^{\cN}\sum_{k\neq j}^{\cN} r_j [\mP_j,\mP_k]V(a_j-a_k+\ii(r_j-r_k)\delta/2)\big(\cA_{r_j}(x-a_j)-\cA_{r_k}(x-a_k)\big) \\
=\frac12\sum_{j=1}^{\cN}\sum_{k\neq j}^{\cN} (r_j+r_k) [\mP_j,\mP_k]V(a_j-a_k+\ii(r_j-r_k)\delta/2)\big(\cA_{r_j}(x-a_j)-\cA_{r_k}(x-a_k)\big) \\
=\sum_{j=1}^{\cN}\sum_{k\neq j}^{\cN}( r_j+r_k )[\mP_j,\mP_k]V(a_j-a_k+\ii(r_j-r_k)\delta/2)\cA_{r_j}(x-a_j).
\end{multline}
Moreover, changing variables $j\leftrightarrow k$ in the double sum in the first line and in the first double sum in the third line of the right-hand side of \eqref{eq:UTUx1}, we arrive at
\begin{align}\label{eq:UTUx2}
\ii[\cU_0\ocomma \cT\cU_{0,x}]=&\; -\ii \sum_{j=1}^{\cN} [\mM,\mP_j] \cA_{r_j}'(x-a_j) - \sum_{j=1}^{\cN}\sum_{k\neq j}^{\cN} r_k [\mP_j,\mP_k]\alpha(a_j-a_k+\ii(r_j-r_k)\delta/2)\cA_{r_j}'(x-a_j) \nonumber \\
&\; -\sum_{j=1}^{\cN}\sum_{k\neq j}^{\cN} (r_j+r_k) [\mP_j,\mP_k]V(a_j-a_k+\ii(r_j-r_k)\delta/2)\cA_{r_j}(x-a_j) \nonumber \\
&\; -\frac12 \sum_{j=1}^{\cN}\sum_{k\neq j}^{\cN} r_k [\mP_j,\mP_k]\cF_{r_j}'(x-a_j)
 + \frac12\sum_{j=1}^{\cN}\sum_{k\neq j}^{\cN}r_j [\mP_j,\mP_k] \varkappa'(a_j-a_k+\ii(r_j-r_k)\delta/2)\cE  \nonumber \\
 &\; -\ii \gamma_0 [ (\mP+\mQ)\cE\ocomma\cU_0]. 
\end{align}

Inserting \eqref{eq:Ut}, \eqref{eq:UUx2}, the second equation in \eqref{eq:TUx}, and \eqref{eq:UTUx2} into \eqref{eq:sncILW3} gives
\begin{align}\label{eq:combined} 
0=&\; \Bigg( \dot{\mM}  +\frac14\sum_{j=1}^{\cN}\sum_{k\neq j}^{\cN} (r_j+r_k)[\mP_j,\mP_k]\varkappa'(a_j-a_k+\ii(r_j-r_k)\delta/2)         \Bigg)\cE \nonumber \\
&\; +\sum_{j=1}^{\cN} \Bigg( \ii r_j \dot{\mP}_j  - \sum_{k\neq j}^{\cN} (r_j+r_k)[\mP_j,\mP_k]V(a_j-a_k+\ii(r_j-r_k)\delta/2)\Bigg)\cA_{r_j}(x-a_j) \nonumber \\
&\; + \sum_{j=1}^{\cN} \Bigg( -\ii r_j\mP_j\dot{a}_j+\ii r_j\{\mM,\mP_j\} -\ii[\mM,\mP_j] \nonumber \\
&\; \phantom{+ \sum_{j=1}^{\cN} \Bigg( }-\sum_{k\neq j}^{\cN} r_k\big(r_j \{\mP_j,\mP_k\} +[\mP_j,\mP_k]  \big)  \alpha(a_j-a_k+\ii(r_j-r_k)\delta/2)   \Bigg) \cA_{r_j}'(x-a_j) \nonumber \\
&\; + \sum_{j=1}^{\cN}  \big( \mP_j^2-\mP_j   \big)    \cA_{r_j}''(x-a_j) -\frac12\sum_{j=1}^{\cN} \Bigg(2\mP_j^2+ \sum_{k\neq j}^{\cN} \big(r_jr_k\{ \mP_j, \mP_k\} + r_k[\mP_j, \mP_k] \big) \Bigg)\cF_{r_j}'(x-a_j),
\end{align}
where we have symmetrized the double sum in the first line using the fact that $\varkappa'(z)$ is an odd function. 

We first consider the conditions under which the terms proportional to $\cF_{r_j}'(x-a_j)$ vanish. We write
\begin{align}
2\mP_j^2+ \sum_{k\neq j}^{\cN} \big(r_jr_k\{ \mP_j, \mP_k\} + r_k[\mP_j, \mP_k] \big)  \nonumber 
=&\;  2\mP_j^2 -(1-r_j) \mX_j \mP_j + (1+r_j)\mP_j\mX_j,
\end{align}
where $\mX_j\coloneqq \sum_{k\neq j}^{\cN} r_k \mP_k$, and notice that the right-hand side vanishes if $\mX_j=-r_j\mP_j$, i.e., if \eqref{eq:sumconstraint4} holds. 

Thus, the functions $\mU$ and $\mV$ defined in \eqref{eq:ansatzgen} satisfy the sncILW equation whenever \eqref{eq:sumconstraint4} holds and the following conditions from \eqref{eq:combined} are fulfilled,
\begin{equation}\label{eq:dotMSH}
\dot{\mM}= -\frac14\sum_{j=1}^{\cN}\sum_{k\neq j}^{\cN} (r_j+r_k)[\mP_j,\mP_k]\varkappa'(a_j-a_k+\ii(r_j-r_k)\delta/2)
\end{equation}
and
\begin{align}
r_j\mP_j\dot{a}_j =&\; r_j\{\mM,\mP_j\}-[\mM,\mP_j]+\ii \sum_{k\neq j}^{\cN} r_k\big(r_j \{\mP_j,\mP_k\} + [\mP_j,\mP_k]  \big)  \alpha(a_j-a_k+\ii(r_j-r_k)\delta/2), \label{eq:1storder1}\\
r_j\dot{\mP}_j = &\;  -\ii\sum_{k\neq j}^{\cN} (r_j+r_k)[\mP_j,\mP_k]V(a_j-a_k+\ii(r_j-r_k)\delta/2), \\
\mP_j^2=&\; \mP_j,
\end{align}
for $j=1,\ldots,\cN$ and \eqref{eq:sumconstraint4}. Recalling the notation \eqref{eq:PjQj} and \eqref{eq:shorthandsolitons}, we see that these are equivalent to \eqref{eq:Mdotgen}, \eqref{eq:ajdotsolgen}--\eqref{eq:Pj2solgen} and \eqref{eq:constraintPjQj}. The result follows. 
\end{proof}

\section{B\"{a}cklund transformation}\label{sec:BT}
Throughout this section, $V(z)=\wp_2(z)$, $\alpha(z)=\zeta_2(z)$,  and $M=N$.

In this section, we prove that solutions of the elliptic sCM equations of motion \eqref{eq:sCMa}--\eqref{eq:sCMgh} 
are, under certain conditions, also solutions of a B\"{a}cklund transformation for the elliptic sCM system. This B\"acklund transformation is given by
\begin{equation}\label{eq:BT1}
\begin{split}
&\dot{a}_j \langle f_j|  = 2\langle f_j| \mM   +2\ii \sum_{k\neq j}^N \langle f_j|e_k\rangle\langle f_k|  \alpha(a_j-a_k) -2\ii \sum_{k=1}^N \langle f_j | g_k\rangle \langle h_k|  \alpha(a_j-b_k),     \\
&\dot{b}_j |g_j\rangle =  2\mM |g_j\rangle -2\ii \sum_{k\neq j}^N |g_k\rangle\langle h_k|g_j\rangle  \alpha(b_j-b_k) + 2\ii \sum_{k=1}^N |e_k\rangle\langle f_k | g_j\rangle \alpha(b_j-a_k)
\end{split}\quad (j=1,\ldots,N),
\end{equation}
together with \eqref{eq:sCMef}, \eqref{eq:sCMgh}, and \eqref{eq:Mdotgen}. The precise statement is as follows.

\begin{proposition}\label{prop:equiv}
Let $\mM$, $\{a_j,|e_j\rangle,\langle f_j|\}_{j=1}^{N}$, and $\{b_j,|g_j\rangle,\langle h_j|\}_{j=1}^N$ be a solution of the system consisting of \eqref{eq:Mdotgen} and the sCM systems \eqref{eq:sCMa}--\eqref{eq:sCMgh} 
on an interval $[0,\tau)$ for some $\tau\in (0,\infty)\cup\{\infty\}$, with initial conditions satisfying \eqref{eq:sCMconstraint1}, \eqref{eq:BT}, and \eqref{eq:constraintPjQj} at $t=0$, and where \eqref{eq:ajak2} holds on $[0,\tau)$. Then, the first-order system consisting of \eqref{eq:sCMef}, \eqref{eq:sCMgh}, \eqref{eq:Mdotgen}, and \eqref{eq:BT1} is satisfied on $[0,\tau)$.  
\end{proposition}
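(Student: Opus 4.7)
The plan is a standard consistency-of-constraints argument. I would define the residuals
\begin{align*}
R_j^a &\coloneqq \dot{a}_j \langle f_j| - 2\langle f_j| \mM - 2\ii \sum_{k\neq j}^N \langle f_j|e_k\rangle\langle f_k|  \alpha(a_j-a_k) + 2\ii \sum_{k=1}^N \langle f_j | g_k\rangle \langle h_k|  \alpha(a_j-b_k), \\
R_j^b &\coloneqq \dot{b}_j |g_j\rangle - 2\mM |g_j\rangle + 2\ii \sum_{k\neq j}^N |g_k\rangle\langle h_k|g_j\rangle  \alpha(b_j-b_k) - 2\ii \sum_{k=1}^N |e_k\rangle\langle f_k | g_j\rangle \alpha(b_j-a_k)
\end{align*}
for $j = 1, \ldots, N$, which measure the failure of \eqref{eq:BT1}. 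By hypothesis $R_j^a(0) = 0$ and $R_j^b(0) = 0$. The goal is to show that the $R_j^{a,b}$ satisfy a \emph{homogeneous} linear system of ODEs in $t$ along the sCM flow, with coefficients continuous on $[0,\tau)$; the uniqueness theorem for linear initial value problems then forces $R_j^a \equiv R_j^b \equiv 0$.

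To execute this, I would differentiate $R_j^a$ in $t$, replacing $\ddot{a}_j$ via \eqref{eq:sCMa}, the spin derivatives $\langle \dot f_j|$, $|\dot e_k\rangle$, $\langle \dot f_k|$, $|\dot g_k\rangle$, $\langle \dot h_k|$ via \eqref{eq:sCMef} and \eqref{eq:sCMgh}, and $\dot{\mM}$ via \eqref{eq:Mdotgen}. The derivatives of $\alpha(a_j-a_k)$ and $\alpha(a_j-b_k)$ bring in $\dot{a}_j - \dot{a}_k$ and $\dot{a}_j - \dot{b}_k$; whenever a velocity $\dot a_\ell$ or $\dot b_\ell$ appears, I split it as (its Bäcklund right-hand side from \eqref{eq:BT1}) plus $R_\ell^{a,b}$, so that the residuals enter explicitly and every remaining velocity is replaced by the explicit spin expression. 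The resulting expression simplifies using the elliptic addition formula for $\zeta_2$ and its derivative. This is where the elliptic case differs genuinely from the hyperbolic one: the formula produces the non-trivial $\varkappa$-terms, and \eqref{eq:Mdotgen} is designed precisely so that these cancel when the time derivatives of $\langle f_j|\mM$ and $\mM|g_j\rangle$ are incorporated. The $\gamma_0$-remainders produced by the shift $\zeta_2(z) = \zeta_1(z) + \gamma_0 z$ cancel using $\mP = \mQ$, which is preserved by the flow.

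The main obstacle is the bookkeeping of the cubic-in-spin triple sums arising from these substitutions. They must reorganize, after symmetrization and relabelling of dummy indices, into a $\C$-linear combination of $\{R_\ell^a, R_\ell^b\}_{\ell=1}^N$ with coefficients (matrices, bras, and kets) built from the sCM variables and bounded on compact subintervals of $[0,\tau)$. Following the streamlined viewpoint advertised in Remark 6 of Section~\ref{subsec:IV} (see also \cite{berntson2022consistency}), the essential point is that one need not compute these coefficients explicitly: it suffices to check that, modulo the elliptic identities, every surviving term lies manifestly in the $\C$-linear span of the $R_\ell^{a,b}$. The argument for $R_j^b$ is symmetric, with the roles of the $a$- and $b$-sectors interchanged.
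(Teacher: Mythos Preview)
Your approach is correct and takes a genuinely different route from the paper's. Rather than differentiating the residuals along the second-order flow, the paper proceeds indirectly: Lemma~\ref{lem:consistency} shows that the first-order system \eqref{eq:sCMef}, \eqref{eq:sCMgh}, \eqref{eq:Mdotgen}, \eqref{eq:BT1} is itself a well-posed ODE (with $\dot a_j$ read off from the scalar projection $2\langle f_j|\mB_j|e_j\rangle$, and the full vector equation then shown to persist by a linear-ODE argument for the quantities $\langle F_j|\coloneqq\langle f_j|\mB_j(1-\mP_j)$); Lemma~\ref{lem:BT} then shows that solutions of this first-order system also solve the second-order sCM equations \eqref{eq:sCMa}, \eqref{eq:sCMb}. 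Proposition~\ref{prop:equiv} follows by uniqueness of the second-order initial value problem: the first-order solution of Lemma~\ref{lem:consistency} and the given sCM solution must coincide on $[0,\tau)$, whence \eqref{eq:BT1} holds there. Your direct route is more economical for this particular statement, while the paper's route also delivers Lemmas~\ref{lem:consistency} and~\ref{lem:BT} as structural results of independent interest (consistency of the B\"acklund transformation, and that it maps sCM solutions to sCM solutions). The algebraic workload is comparable: your computation of $\dot R_j^a$ modulo linear-in-$R$ terms is essentially the computation of Appendix~\ref{app:lemBT} read in reverse. Two small points to tighten in a full write-up: when a bare scalar $\dot a_\ell$ or $\dot b_\ell$ appears (not already multiplying $\langle f_\ell|$ or $|g_\ell\rangle$), contract $R_\ell^{a}$ with $|e_\ell\rangle$ (resp.\ $\langle h_\ell|$ with $R_\ell^b$) using the preserved normalization \eqref{eq:sCMconstraint1} to extract a scalar linear in the residuals; and the constraint $\mP=\mQ$ enters not through $\gamma_0$-shifts but through the simplification $\sum_{l\neq j,k}r_l\mP_l=-r_j\mP_j-r_k\mP_k$ needed to close the $\varkappa'$-terms against \eqref{eq:Mdotgen}.
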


In order to prove Proposition~\ref{prop:equiv}, we need two lemmas. The first lemma shows that the first-order system of Proposition~\ref{prop:solitonsgen} admits a unique solution under mild assumptions, generalizing a known result in Cases I--III \cite{berntson2022consistency}.

\begin{lemma}\label{lem:consistency}
The initial value problem consisting of \eqref{eq:sCMef}, \eqref{eq:sCMgh}, \eqref{eq:Mdotgen}, and \eqref{eq:BT1} with initial conditions satisfying \eqref{eq:sCMconstraint1}, \eqref{eq:constraintPjQj}, and \eqref{eq:BT1} at $t=0$ has a unique solution on a maximal interval $[0,\tau_{\mathrm{max}})$ for some $\tau_{\mathrm{max}}\in (0,\infty)\cup\{\infty\}$. On this interval, it holds that
\begin{equation}\label{eq:ajak2}
\begin{gathered}
a_j\neq a_k \bmod \{2\ell,2\ii\delta\}, \quad b_j \neq b_k \bmod \{2\ell,2\ii\delta\} \quad (1\leq j<k\leq N), \\
a_j \neq b_k \bmod \{2\ell,2\ii\delta\}  \quad (j,k=1,\ldots,N).
\end{gathered}
\end{equation}
\end{lemma}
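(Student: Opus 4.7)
My plan is to treat Lemma~\ref{lem:consistency} as a routine existence--uniqueness statement for an analytic system of ODEs. First, I would cast \eqref{eq:sCMef}, \eqref{eq:sCMgh}, \eqref{eq:Mdotgen}, \eqref{eq:BT1} as a first-order autonomous system $\dot u = F(u)$ with state vector
\[
u \coloneqq \bigl(\mM,\{a_j\}_{j=1}^N,\{|e_j\rangle\}_{j=1}^N,\{\langle f_j|\}_{j=1}^N,\{b_j\}_{j=1}^N,\{|g_j\rangle\}_{j=1}^N,\{\langle h_j|\}_{j=1}^N\bigr).
\]
The equations \eqref{eq:sCMef}, \eqref{eq:sCMgh}, and \eqref{eq:Mdotgen} already provide the time derivatives of the spin and background variables. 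To extract $\dot a_j$ from the first line of \eqref{eq:BT1}, I would pair both sides with $|e_j\rangle$ and invoke the normalization $\langle f_j|e_j\rangle=1$ from \eqref{eq:sCMconstraint1}; analogously, pairing the second line with $\langle h_j|$ yields a scalar expression for $\dot b_j$. The remaining components of the vector equations in \eqref{eq:BT1} are compatibility constraints whose propagation along the flow is addressed by Proposition~\ref{prop:equiv} and does not enter the well-posedness argument for the lemma itself.

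Next, I would verify that $F$ is analytic on the open subset $\Omega$ of phase space singled out by \eqref{eq:ajak2}. Since $\alpha(z)=\zeta_2(z)$ and $V(z)=\wp_2(z)$ are meromorphic with pole set contained in the lattice $2\ell\Z+2\ii\delta\Z$, and since $\varkappa'(z)$ enjoys the same property, every term on the right-hand side of the reduced system is analytic in $u$ throughout $\Omega$. The initial datum lies in $\Omega$: were any of the differences $a_j-a_k$, $b_j-b_k$, or $a_j-b_k$ a lattice point at $t=0$, the right-hand sides of \eqref{eq:BT1} would be singular, contradicting the hypothesis that \eqref{eq:BT1} is satisfied at $t=0$ with finite values of $\dot a_j, \dot b_j$.

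With this in hand, the Picard--Lindel\"of theorem (in the analytic category) produces a unique local solution on some $[0,\tau_0)$ with $\tau_0>0$, and the standard maximal-extension argument yields $\tau_{\mathrm{max}}\in(0,\infty)\cup\{\infty\}$ together with a unique $\Omega$-valued solution on $[0,\tau_{\mathrm{max}})$ which cannot be prolonged as an $\Omega$-valued solution. Since membership in $\Omega$ is precisely the content of \eqref{eq:ajak2}, this condition holds throughout $[0,\tau_{\mathrm{max}})$ by construction.

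The only step requiring genuine care is the passage from the vector equations in \eqref{eq:BT1} to scalar evolution equations for $\dot a_j$ and $\dot b_j$: one must select a definite projection (here the pairing with $|e_j\rangle$ and $\langle h_j|$ normalised via \eqref{eq:sCMconstraint1}) in order to define $F$, and relegate the remaining component equations to preserved constraints to be handled by Proposition~\ref{prop:equiv}. The hypothesis that \eqref{eq:BT1} holds as a vector identity at $t=0$ ensures that any other scalar projection would produce the same value of $\dot a_j$ at the initial instant, so the initial data are consistent and the construction of $F$ is unambiguous.
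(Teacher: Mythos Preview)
Your reduction to a scalar ODE system by pairing the first line of \eqref{eq:BT1} with $|e_j\rangle$ and the second with $\langle h_j|$ is exactly what the paper does (see \eqref{eq:BTscalar}), and your invocation of Picard--Lindel\"of together with maximal extension to obtain $\tau_{\mathrm{max}}$ and the persistence of \eqref{eq:ajak2} is also fine. However, there is a genuine gap: you relegate the propagation of the remaining components of the vector equations \eqref{eq:BT1} to Proposition~\ref{prop:equiv}, but this is circular. The proof of Proposition~\ref{prop:equiv} explicitly invokes Lemma~\ref{lem:consistency} (and also Lemma~\ref{lem:BT}, which in turn differentiates the full vector equations \eqref{eq:BT1} and therefore presupposes that they hold for all $t$). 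The logical flow in the paper is Lemma~\ref{lem:consistency} $\Rightarrow$ Lemma~\ref{lem:BT} $\Rightarrow$ Proposition~\ref{prop:equiv}, not the other way around.

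More importantly, the statement of the lemma asserts existence and uniqueness of a solution to the system that includes \eqref{eq:BT1} as a \emph{vector} equation, not merely its scalar projection. Verifying that the full vector identity \eqref{eq:BT1} is preserved along the flow of the scalar system is precisely the non-routine content of the proof, and it occupies the bulk of Appendix~\ref{app:lemconsistency}. Concretely, the paper introduces the defects $\langle F_j|\coloneqq \langle f_j|\mB_j(1-\mP_j)$ and $|E_j\rangle\coloneqq (1-\mP_j)\mB_j|e_j\rangle$ (see \eqref{eq:FjEj}), which vanish if and only if the full vector equations \eqref{eq:BTshort} hold, and then carries out a lengthy calculation using \eqref{eq:Id2}, \eqref{eq:alphaId2}, and the balancing condition \eqref{eq:sumconstraint4} to establish \eqref{eq:FjEjODEs}: the defects satisfy a closed linear homogeneous ODE system and hence remain zero if they vanish initially. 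Without this argument, you have only solved the projected scalar system, not the initial value problem claimed in the lemma.
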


\begin{proof}
See Appendix~\ref{app:lemconsistency}. 	
\end{proof}

The second lemma shows that the first-order system of Proposition~\ref{prop:equiv} implies the second-order sCM equations of motion \eqref{eq:sCMa}--\eqref{eq:sCMgh}.

\begin{lemma}\label{lem:BT}
The solution of \eqref{eq:sCMef}, \eqref{eq:sCMgh}, \eqref{eq:Mdotgen}, and \eqref{eq:BT1} in Lemma~\ref{lem:consistency} solves \eqref{eq:sCMa} and \eqref{eq:sCMb} on $[0,\tau_{\mathrm{max}})$. 
\end{lemma}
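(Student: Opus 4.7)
My plan is to prove \eqref{eq:sCMa} by differentiating the first equation of the B\"acklund transformation \eqref{eq:BT1} in time and pairing the resulting identity with $|e_j\rangle$; equation \eqref{eq:sCMb} then follows by the symmetric argument applied to the second equation of \eqref{eq:BT1} paired with $\langle h_j|$. Two preservation properties simplify the work: the biorthogonality relation $\langle f_j|e_j\rangle = 1$ is preserved by \eqref{eq:sCMef} (differentiate and use evenness of $\wp_2$), and the constraint $\mP = \mQ$ of \eqref{eq:constraintPjQj} is preserved under \eqref{eq:sCMef}--\eqref{eq:sCMgh} by antisymmetry of the commutator under $j \leftrightarrow k$.

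The calculation proceeds in three stages. First, I differentiate the first equation of \eqref{eq:BT1} in $t$, producing $\ddot{a}_j\langle f_j| + \dot{a}_j\langle\dot{f}_j|$ on the left and, on the right, the terms $2\langle\dot{f}_j|\mM + 2\langle f_j|\dot{\mM}$ together with the derivatives of the two sums; each $\alpha(a_j - a_k)$ or $\alpha(a_j - b_k)$ contributes a factor $\alpha'(\cdot) = -\wp_2(\cdot)$ multiplied by a difference of velocities $\dot{a}_j - \dot{a}_k$ or $\dot{a}_j - \dot{b}_k$. Second, I substitute every time derivative using \eqref{eq:sCMef}, \eqref{eq:sCMgh}, \eqref{eq:Mdotgen}, and the B\"acklund transformation \eqref{eq:BT1} itself (to eliminate $\dot{a}_k$ and $\dot{b}_k$). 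Third, I pair with $|e_j\rangle$; the constraint $\langle f_j|e_j\rangle = 1$ together with $\langle\dot{f}_j|e_j\rangle = -\langle f_j|\dot{e}_j\rangle$ reduce the left side to $\ddot{a}_j$ plus a term controlled by \eqref{eq:sCMef}.

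The main obstacle is the algebraic simplification in the third stage, where the substitution produces numerous triple sums. These partition into three classes: sums internal to the $(a,e,f)$ system, sums coming from the $(b,g,h)$ system (entering via $\alpha(a_j - b_k)$ and via the $\dot{\mM}$ contribution from \eqref{eq:Mdotgen}), and mixed sums. For the internal $(a,e,f)$ triple sums --- involving products $\wp_2(a_j - a_k)\zeta_2(a_j - a_l)$ and permutations thereof --- I expect to invoke a three-term functional identity for $\zeta_2$ and $\wp_2$ from Appendix~\ref{app:elliptic}, which converts such products into the $\wp_2'(a_j - a_k)$ appearing on the right-hand side of \eqref{eq:sCMa}. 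The delicate point is the cancellation of the mixed $(a,e,f)$--$(b,g,h)$ contributions: the $\varkappa'$-terms generated by substituting \eqref{eq:Mdotgen} must precisely cancel against the terms produced when the velocity differences $\dot{a}_j - \dot{b}_k$ in the $\alpha(a_j - b_k)$ derivatives are replaced using \eqref{eq:BT1}. This cancellation is where the preserved constraint $\mP = \mQ$ enters essentially, and matches the reason that \eqref{eq:Mdotgen} must take the precise form it does; without the $\varkappa'$ contribution to $\dot{\mM}$, the residual mixed terms would obstruct the closure to the sCM second-order equations.
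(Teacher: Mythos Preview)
Your proposal is correct and follows essentially the same strategy as the paper's proof: differentiate the first B\"acklund relation in time, substitute the first-order equations (including \eqref{eq:BT1} itself for the velocities $\dot a_k,\dot b_k$), invoke the elliptic functional identities to collapse the resulting double/triple sums, and observe that the residual $\varkappa'$-terms are exactly absorbed by $\dot\mM$ via \eqref{eq:Mdotgen} once the preserved constraint $\mP=\mQ$ is used. The paper organizes the computation slightly differently---it packages the $a$- and $b$-variables into a single family of size $\cN=2N$ with signs $r_j=\pm1$, keeps the identity at the covector level $\ddot a_j\langle f_j|=\cdots$ throughout, and pairs with $|e_j\rangle$ only at the very last step---but the logical content and the key cancellations (via the identity $\alpha V=-\tfrac12(V'+\varkappa')$ for the diagonal terms and the three-variable identity \eqref{eq:Id2} for the off-diagonal ones) are the same as what you outline.
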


\begin{proof}
See Appendix~\ref{app:lemBT}. 	
\end{proof}

\begin{proof}[Proof of Proposition~\ref{prop:equiv}]
By standard arguments (see, e.g., \cite[Theorem~8.1]{hartman1982}) the solution is unique on $[0,\tau)$. We show that this solution coincides with the solution of the initial value problem of Lemma~\ref{lem:consistency} on $[0,\tau)$, assuming compatible initial conditions.

First suppose, seeking a contradiction, that $\tau>\tau_{\mathrm{max}}$. Then, by Lemma~\ref{lem:BT} and standard arguments (see, e.g., \cite[Corollary~3.2]{hartman1982}), the solution of the initial value problem of Lemma~\ref{lem:consistency} solves \eqref{eq:sCMa}--\eqref{eq:sCMgh} and \eqref{eq:Mdotgen}, 
and either (i) one of the solution variables tends to infinity or (ii) \eqref{eq:ajak} is violated as $t\to \tau_{\mathrm{max}}$. Because our solution of the initial value problem associated with \eqref{eq:sCMa}--\eqref{eq:sCMgh} and \eqref{eq:Mdotgen} exists and is unique on $[0,\tau)$, possibility (i) is excluded.  By the assumption that \eqref{eq:ajak2} holds on $[0,\tau)$, possibility (ii) is also excluded and we have obtained a contradiction.  

Hence, $\tau\leq \tau_{\mathrm{max}}$. By the uniqueness of the solution to the initial value problem of Lemma~\ref{lem:consistency} on $[0,\tau_{\mathrm{max}})$, Lemma~\ref{lem:BT}, and the uniqueness of the solution of the initial value problem associated with \eqref{eq:sCMa}--\eqref{eq:sCMgh} and \eqref{eq:Mdotgen} on $[0,\tau)$, the result follows. 
\end{proof}

\section{Proof of Theorem~\ref{thm:sncILW} in Case IV}\label{section5}
Throughout this section, $V(z)=\wp_2(z)$, $\alpha(z)=\zeta_2(z)$,  and $M=N$.
We first establish the following lemma.

\begin{lemma}\label{lem:ajtbjt}
Let $\tilde{a}_j\coloneqq a_j-\ii\delta/2$ and $\tilde{b}_j\coloneqq b_j+\ii\delta/2$ for $j=1,\ldots,N$. Suppose that $\mM$, $\{\tilde{a}_j,|e_j\rangle,\langle f_j|\}_{j=1}^N$, and $\{\tilde{b}_j,|g_j\rangle,\langle h_j|\}_{j=1}^N$ satisfy the equations \eqref{eq:sCMef}, \eqref{eq:sCMgh}, \eqref{eq:Mdotgen}, and \eqref{eq:BT1} on $[0,\tau)$ with initial conditions satisfying \eqref{eq:sCMconstraint1} and \eqref{eq:constraintPjQj} at $t=0$. Then, \eqref{eq:ajdotsolgen}--\eqref{eq:Pj2solgen} and \eqref{eq:constraintPjQj} hold on $[0,\tau)$.
\end{lemma}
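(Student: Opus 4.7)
The plan is to show that under the shift $\tilde a_j = a_j - \ii\delta/2$, $\tilde b_j = b_j + \ii\delta/2$, the hypothesized first-order system is \emph{literally} the system \eqref{eq:ajdotsolgen}--\eqref{eq:Pj2solgen} plus \eqref{eq:constraintPjQj}, modulo a single use of the $2\ii\delta$-periodicity of $\alpha=\zeta_2$ and elementary multiplications by $|e_j\rangle$ or $\langle h_j|$. The whole proof is bookkeeping; there is no real analytic obstacle, just a careful translation between the two sets of variables.

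First I would compare arguments. Same-species differences are unshifted, $\tilde a_j-\tilde a_k=a_j-a_k$ and $\tilde b_j-\tilde b_k=b_j-b_k$, whereas the cross-species differences become $\tilde a_j-\tilde b_k=a_j-b_k-\ii\delta$ and $\tilde b_j-\tilde a_k=b_j-a_k+\ii\delta$. Since $\zeta_2$ is $2\ii\delta$-periodic (recalled in the introduction), $\alpha(z-\ii\delta)=\alpha(z+\ii\delta)$, so the cross-species $\alpha$-factors in \eqref{eq:BT1} written in tilde variables become exactly the $\alpha(a_j-b_k+\ii\delta)$ and $\alpha(b_j-a_k+\ii\delta)$ appearing in \eqref{eq:ajdotsolgen}. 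Multiplying the first line of \eqref{eq:BT1} on the left by $|e_j\rangle$ and the second on the right by $\langle h_j|$, and using $\mP_j=|e_j\rangle\langle f_j|$, $\mQ_j=|g_j\rangle\langle h_j|$, then produces \eqref{eq:ajdotsolgen}. Equation \eqref{eq:Mdotgen} is in the hypothesis and involves only same-species differences, so it transfers unchanged.

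The remaining assertions follow by direct computation from the Gibbons--Hermsen equations \eqref{eq:sCMef} and \eqref{eq:sCMgh}. Applying the product rule to $\mP_j=|e_j\rangle\langle f_j|$ and substituting \eqref{eq:sCMef}, the two contributions combine into $2\ii\sum_{k\neq j}(\mP_k\mP_j-\mP_j\mP_k)V(a_j-a_k)=-2\ii\sum_{k\neq j}[\mP_j,\mP_k]V(a_j-a_k)$, which is the first line of \eqref{eq:Pjdotsolgen}; the analogous computation from \eqref{eq:sCMgh} gives the second. For \eqref{eq:Pj2solgen}, the two sums in $\frac{d}{dt}\langle f_j|e_j\rangle$ obtained from \eqref{eq:sCMef} cancel termwise, so $\langle f_j|e_j\rangle$ is conserved; by the initial condition \eqref{eq:sCMconstraint1} it equals $1$ throughout $[0,\tau)$, whence $\mP_j^2=|e_j\rangle\langle f_j|e_j\rangle\langle f_j|=\mP_j$, and similarly $\mQ_j^2=\mQ_j$. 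Finally, summing the identity for $\dot{\mP}_j$ over $j$ and using the antisymmetry of the commutator together with the evenness of $V=\wp_2$ gives $\dot{\mP}=\dot{\mQ}=0$, so the initial equality $\mP=\mQ$ of \eqref{eq:constraintPjQj} persists on $[0,\tau)$. The only genuinely conceptual ingredient is the periodicity trick of the first step; everything else is routine.
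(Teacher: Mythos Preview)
Your proof is correct and matches the paper's argument almost verbatim: left-multiply the first line of \eqref{eq:BT1} (in tilde variables) by $|e_j\rangle$ and right-multiply the second by $\langle h_j|$ to obtain \eqref{eq:ajdotsolgen}, then derive \eqref{eq:Pjdotsolgen}, \eqref{eq:Pj2solgen}, and the conservation of $\mP=\mQ$ directly from \eqref{eq:sCMef}--\eqref{eq:sCMgh} via the product rule and the antisymmetry/evenness argument. You are slightly more explicit than the paper about the $2\ii\delta$-periodicity step needed to convert $\alpha(\tilde a_j-\tilde b_k)=\alpha(a_j-b_k-\ii\delta)$ into $\alpha(a_j-b_k+\ii\delta)$, which the paper buries in the phrase ``we see that \eqref{eq:BTmatrix} is equivalent to \eqref{eq:ajdotsolgen}''.
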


\begin{proof}
Consider the set of equations \eqref{eq:BT1} with the replacements
\begin{equation}\label{eq:ajtoatj}
\{a_j\}_{j=1}^N\to \{\tilde{a}_j\}_{j=1}^N,\qquad \{b_j\}_{j=1}^N\to \{\tilde{b}_j\}_{j=1}^N.	
\end{equation}
By left-multiplying the first set of the resulting equations by $|e_j\rangle$ and right-multiplying the second set of the resulting equations by $\langle h_j|$, we obtain 
\begin{equation}	\label{eq:BTmatrix}
\begin{split}	
\dot{\tilde{a}}_j |e_j\rangle\langle f_j|  =&\; 2|e_j\rangle\langle f_j| \mM   +2\ii \sum_{k\neq j}^N |e_j\rangle\langle f_j|e_k\rangle\langle f_k|  \alpha(\tilde{a}_j-\tilde{a}_k) -2\ii \sum_{k=1}^N |e_j\rangle\langle f_j | g_k\rangle \langle h_k|  \alpha(\tilde{a}_j-\tilde{b}_k),\\
\dot{\tilde{b}}_j |g_j\rangle\langle h_j| =&\;   2\mM |g_j\rangle\langle h_j| -2\ii \sum_{k\neq j}^N |g_k\rangle\langle h_k|g_j\rangle\langle h_j| \alpha(\tilde{b}_j-\tilde{b}_k) + 2\ii \sum_{k=1}^N |e_k\rangle\langle f_k | g_j\rangle\langle h_j| \alpha(\tilde{b}_j-\tilde{a}_k)
\end{split}
\end{equation}
for $j=1,\ldots,N$. Recalling the definitions \eqref{eq:PjQj} of $\mP_j$ and $\mQ_j$ and those of $\tilde{a}_j$ and $\tilde{b}_j$, we see that \eqref{eq:BTmatrix} is equivalent to \eqref{eq:ajdotsolgen} which thus holds on $[0,\tau)$.

By differentiating \eqref{eq:PjQj} with respect to time and inserting \eqref{eq:sCMef} and \eqref{eq:sCMgh} with \eqref{eq:ajtoatj}, we find
\begin{subequations}\label{eq:PjdotQjdot} 
\begin{equation}
\dot{\mP}_j= |\dot{e}_j\rangle\langle f_j|+|e_j\rangle\langle \dot{f}_j|= 2\ii \sum_{k\neq j}^N |e_k\rangle\langle f_k|e_j\rangle\langle f_j|V(\tilde{a}_j-\tilde{a}_k)-2\ii\sum_{k\neq j}^{N}|e_j\rangle\langle f_j|e_k\rangle\langle f_k|V(\tilde{a}_j-\tilde{a}_k) 
\end{equation}
and
\begin{equation}
\dot{\mQ}_j= |\dot{g}_j\rangle\langle h_j|+|g_j\rangle\langle \dot{h}_j|
= 2\ii \sum_{k\neq j}^N |g_k\rangle\langle h_k|g_j\rangle\langle h_j|V(\tilde{b}_j-\tilde{b}_k)-2\ii\sum_{k\neq j}^{N}|g_j\rangle\langle h_j|g_k\rangle\langle h_k|V(\tilde{b}_j-\tilde{b}_k) 
\end{equation}
for $j=1,\ldots,N$.
\end{subequations}
Recalling the definitions of $\mP_j$ and $\mQ_j$ in \eqref{eq:PjQj} and those of $\tilde{a}_j$ and $\tilde{b}_j$ we see that \eqref{eq:PjdotQjdot} is equivalent to \eqref{eq:Pjdotsolgen} which thus holds on $[0,\tau)$.

By differentiating the quantities $\langle f_j|e_j\rangle$ and $\langle h_j|g_j\rangle$ with respect to time and inserting \eqref{eq:sCMef} and \eqref{eq:sCMgh} with \eqref{eq:ajtoatj}, we find
\begin{subequations}\label{eq:trPjdot} 
\begin{align}
\frac{\mathrm{d}}{\mathrm{d}t}\langle f_j|e_j\rangle=&\; \langle \dot{f}_j|e_j\rangle+\langle f_j|\dot{e}_j\rangle  \nonumber \\
=&\;  -2\ii \sum_{k\neq j}^N \langle f_j|e_k\rangle\langle f_k| e_j\rangle V(\tilde{a}_j-\tilde{a}_k)+2\ii\sum_{k\neq j}^{N} \langle f_j|e_k\rangle\langle f_k|e_j\rangle   V(\tilde{a}_j-\tilde{a}_k) =0
\end{align}
and
\begin{align}
\frac{\mathrm{d}}{\mathrm{d}t}\langle h_j|g_j\rangle=&\; \langle \dot{h}_j|g_j\rangle+\langle h_j|\dot{g}_j\rangle  \nonumber \\
=&\;  -2\ii \sum_{k\neq j}^N \langle h_j|g_k\rangle\langle h_k| g_j\rangle V(\tilde{b}_j-\tilde{b}_k)+2\ii\sum_{k\neq j}^{N} \langle h_j|g_k\rangle\langle h_k|g_j\rangle   V(\tilde{b}_j-\tilde{b}_k) =0
\end{align}
for $j=1,\ldots,N$.
\end{subequations}
Because \eqref{eq:sCMconstraint1} holds at $t=0$ by assumption, \eqref{eq:trPjdot} guarantees it holds on $[0,\tau)$. Thus, by writing $\mP_j^2=|e_j\rangle\langle f_j |e_j\rangle\langle f_j|=\langle f_j|e_j\rangle\mP_j$ and $\mQ_j^2=|g_j\rangle\langle h_j |g_j\rangle\langle h_j|=\langle h_j|g_j\rangle \mQ_j$, we see that \eqref{eq:Pj2solgen} holds on $[0,\tau)$. 

To show that \eqref{eq:constraintPjQj} holds on $[0,\tau)$, we use \eqref{eq:Pj-Qjdot} with \eqref{eq:ajtoatj}; hence $\dot{\mP}=\dot{\mQ}=0=\dot{\mP}-\dot{\mQ}$ which implies that \eqref{eq:constraintPjQj} holds on $[0,\tau)$ because it holds at $t=0$. 
\end{proof}

We are now ready to prove Theorem~\ref{thm:sncILW}.

\begin{proof}[Proof of Theorem \ref{thm:sncILW}]
Suppose that $\mM(t)$, $\{a_j(t),|e_j(t)\rangle,\langle f_j(t)|\}_{j=1}^N$, and $\{b_j(t),|g_j(t)\rangle,\langle h_j(t)|\}_{j=1}^N$ defined for $t \in [0, \tau)$ satisfy the assumptions in the statement of Theorem \ref{thm:sncILW} for some $\tau > 0$. For $j=1,\ldots,N$, let $\tilde{a}_j(t) \coloneqq a_j(t)-\ii\delta/2$ and $\tilde{b}_j(t)\coloneqq b_j(t)+\ii\delta/2$. 
By assumption, $\mM$, $\{a_j,|e_j\rangle,\langle f_j|\}_{j=1}^N$, and $\{b_j,|g_j\rangle,\langle h_j|\}_{j=1}^N$ obey \eqref{eq:sCMa}--\eqref{eq:sCMgh} and \eqref{eq:Mdotgen} 
on $[0, \tau)$. The definitions of $\tilde{a}_j$ and $\tilde{b}_j$ imply that \eqref{eq:sCMa}--\eqref{eq:sCMgh}, 
and (\ref{eq:Mdotgen}) hold on $[0, \tau)$ also with $\{a_j, b_j\}_{j=1}^N$ replaced by $\{\tilde{a}_j, \tilde{b}_j\}_{j=1}^N$. 
Moreover, by assumption, the relations \eqref{eq:BT} hold at time $t = 0$. Using the $2\ii \delta$-periodicity of $\alpha$, it follows that the relations \eqref{eq:BT1} with $\{a_j, b_j\}_{j=1}^N$ replaced by $\{\tilde{a}_j, \tilde{b}_j\}_{j=1}^N$ hold at $t = 0$. We conclude that the functions $\mM$, $\{\tilde{a}_j,|e_j\rangle,\langle f_j|\}_{j=1}^N$, and $\{\tilde{b}_j,|g_j\rangle,\langle h_j|\}_{j=1}^N$ solve the initial value problem of Proposition~\ref{prop:equiv}, so by Proposition~\ref{prop:equiv} these functions satisfy the first-order system of Lemma~\ref{lem:consistency} on $[0,\tau)$. In other words, the equations \eqref{eq:BT1} with $\{a_j, b_j\}$ replaced by $\{\tilde{a}_j, \tilde{b}_j\}$ hold for all $t \in [0, \tau)$, so we can use Lemma \ref{lem:ajtbjt} to deduce that the functions $\mM$, $\{a_j,|e_j\rangle,\langle f_j|\}_{j=1}^N$, and $\{b_j,|g_j\rangle,\langle h_j|\}_{j=1}^N$ satisfy \eqref{eq:ajdotsolgen}--\eqref{eq:Pj2solgen} and \eqref{eq:constraintPjQj} on $[0,\tau)$. In particular, $\mM$, $\{a_j,|e_j\rangle,\langle f_j|\}_{j=1}^N$, and $\{b_j,|g_j\rangle,\langle h_j|\}_{j=1}^N$ fulfill the assumptions of Proposition \ref{prop:solitonsgen}. Thus we can employ Proposition \ref{prop:solitonsgen} to infer that the ansatz \eqref{eq:ansatzgen} provides a solution to the periodic sncILW equation \eqref{eq:sncILW}. This completes the proof of Theorem~\ref{thm:sncILW}.
\end{proof}

\section{Construction of soliton solutions} 
\label{sec:examples}

In this section, we show how to solve the nonlinear constraints on the initial data in Theorem~\ref{thm:sncILW}. While we specifically reference the constraints in Case IV, the following results can be straightforwardly adapted to Cases III in Theorem~\ref{thm:sncILW} and Cases I--II in Theorem~\ref{thm:sBO}. One-soliton solutions are derived in Section~\ref{subsec:one-soliton} and a linear algebra problem whose solutions parameterize the corresponding multi-soliton solutions is presented in Section~\ref{subsec:multi-solitons}

\subsection{One-soliton solution}\label{subsec:one-soliton}
Throughout this subsection, $\alpha(z)=\zeta_2(z)$.
For $N=M=1$, the time evolution equations \eqref{eq:sCMa}--\eqref{eq:sCMgh} and (\ref{eq:Mdotgen}) simplify to $\dot{\mM}=0$, 
\begin{equation}
\ddot{a}_1=0,\qquad |\dot{e}_1\rangle=0,\qquad \langle \dot{f}_1|=0,
\end{equation}
and
\begin{equation}
\ddot{b}_1=0,\qquad |\dot{g}_1\rangle=0,\qquad \langle \dot{h}_1|=0.
\end{equation}
Given initial conditions $\mM(0)=\mM_0$, 
\begin{equation}
a_1(0)=a_{1,0},\qquad      \dot{a}_1(0)=v_1,\qquad |e_1(0)\rangle=|e_{1,0}\rangle,\qquad \langle f_1(0)|=\langle f_{1,0}|,
\end{equation}
and
\begin{equation}
b_1(0)=b_{1,0},\qquad      \dot{b}_1(0)=w_1,\qquad |g_1(0)\rangle=|g_{1,0}\rangle,\qquad \langle h_1(0)|=\langle h_{1,0}|,
\end{equation}
we find the solution $\mM=\mM_0$, 
\begin{equation}
a_1=a_{1,0}+v_1 t,\qquad |e_1\rangle = |e_{1,0}\rangle,\qquad \langle f_1|=\langle f_{1,0}|,
\end{equation}
and
\begin{equation}
b_1=b_{1,0}+w_1 t,\qquad |g_1\rangle = |g_{1,0}\rangle,\qquad \langle h_1|=\langle h_{1,0}|.
\end{equation}

The constraints \eqref{eq:sCMconstraint1} and \eqref{eq:constraintPjQj} reduce to $\langle f_{1,0}|e_{1,0}\rangle=1=\langle h_{1,0}|g_{1,0}\rangle$ and $|e_{1,0}\rangle\langle f_{1,0}|=|g_{1,0}\rangle\langle h_{1,0}|$, respectively, which together imply $|g_{1,0}\rangle=c|e_{1,0}\rangle$ and $\langle h_{1,0}|=c^{-1} \langle f_{1,0}|$ for some $c\in\C\setminus\{0\}$. The constraint \eqref{eq:BT} then reduces to 
\begin{equation}\label{eq:M0ev}
\begin{split}
v_1 \langle f_{1,0}| =&\;  2\langle f_{1,0}| \mM_0-2\ii \langle f_{1,0} |\alpha(a_{1,0}-b_{1,0}+\ii\delta),    \\
w_1 |e_{1,0}\rangle = &\; 2\mM_0 |e_{1,0}\rangle - 2\ii  |e_{1,0}\rangle\alpha(a_{1,0}-b_{1,0}+\ii\delta),
\end{split}
\end{equation}
i.e., $\langle f_{1,0}|$ and $|e_{1,0}\rangle$ are left- and right-eigenvectors of $\mM_0$, respectively. By right-multiplying the first equation in \eqref{eq:M0ev} by $|e_{1,0}\rangle$ and left-multiplying the second equation in \eqref{eq:M0ev} by $\langle f_{1,0}|$, we find
\begin{equation}\label{eq:v1w1}
v_1=w_1=2\langle f_{1,0}|\mM_0|e_{1,0}\rangle -2\ii \alpha(a_{1,0}-b_{1,0}+\ii\delta).	
\end{equation}

Collecting the observations above and using Theorem~\ref{thm:sncILW}, we see that
\begin{equation}
\begin{split}
\mU(x,t)=&\; \ee^{2\ii\gamma_0 |e_{1,0}\rangle\langle f_{1,0}|t}	\mU_0(x,t)\ee^{-2\ii\gamma_0 |e_{1,0}\rangle\langle f_{1,0}|t},\\
\mV(x,t)=&\; \ee^{2\ii\gamma_0 |e_{1,0}\rangle\langle f_{1,0}|t}	\mV_0(x,t)\ee^{-2\ii\gamma_0 |e_{1,0}\rangle\langle f_{1,0}|t}	
\end{split}
\end{equation}
with 
\begin{equation}\label{eq:1sol}
\begin{split}
\mU_0(x,t)=&\;\mM_0+\ii |e_{1,0}\rangle\langle f_{1,0}|\big(\alpha(x-a_{1,0}-v_1 t-\ii\delta/2)-\alpha(x-b_{1,0}-v_1t+\ii\delta/2)\big), \\
\mV_0(x,t)=&\; -\mM_0-\ii |e_{1,0}\rangle\langle f_{1,0}|\big(\alpha(x-a_{1,0}-v_1 t+\ii\delta/2)-\alpha(x-b_{1,0}-v_1t-\ii\delta/2)\big), 	
\end{split}
\end{equation}
provides a solution of the sncILW equation \eqref{eq:sncILW} when $\langle f_{1,0}|$ and $|e_{1,0}\rangle$ are left- and right-eigenvectors, respectively of $\mM_0$ corresponding to the same eigenvalue and normalized to satisfy $\langle f_{1,0}|e_{1,0}\rangle=1$ and $v_1$ is given by \eqref{eq:v1w1}.

\begin{remark}
In the generic case where $v_1$ in \eqref{eq:v1w1} has a nonzero imaginary part, \eqref{eq:1sol} does not provide a traveling wave solution and \eqref{eq:ajbj} will be violated in finite time, after which Theorem~\ref{thm:sncILW} does not guarantee \eqref{eq:1sol} solves the sncILW equation \eqref{eq:sncILW}. For $v_1$ to be real, in which case \eqref{eq:1sol} provides a traveling wave solution of the sncILW equation \eqref{eq:sncILW} on $[0,\infty)$, it suffices for $\mM_0$ to be Hermitian (in this case $\langle f_{1,0}|=|e_{1,0}\rangle^{\dag}$ is a possibility but not a requirement unless all eigenvalues of $\mM_0$ are simple) and $b_{1,0}=a_{1,0}^*$. It is interesting to note that in the singular limit of the sncILW equation studied in \cite{berntsonklabbers2022}, no such one-soliton, traveling wave solutions exist. 
\end{remark}

\subsection{Solution of constraints: Case IV}\label{subsec:multi-solitons}
Throughout this subsection, $V(z)=\wp_2(z)$, $\alpha(z)=\zeta_2(z)$,  and $M=N$.
For each $j=1,\ldots,N$, let us identify the vector $|e_j\rangle\in \cV$ with the vector $\ve_j\in \C^d$ whose components $(\ve_j)_{\mu}$, $\mu=1,\ldots,d$ are the components of $|e_j\rangle$ with respect to some given basis of $\cV$. Next, let us identify the collection of $N$ vectors $\ve_j$, $j=1,\ldots,N$, with the single vector $\ve\in \C^{Nd}$ whose components $\ve_{j,\mu}\coloneqq (\ve_j)_{\mu}$ are indexed by $j=1,\ldots,N$ and $\mu=1,\ldots,d$. Similarly, let us identify the three collections of vectors $\{\langle f_j|\}_{j=1}^N$, $\{|g_j\rangle\}_{j=1}^N$, and $\{\langle h_j|\}_{j=1}^N$ with the vectors $\vf=(f_{j,\mu})\in \C^{Nd}$, $\vg=(g_{j,\mu})\in \C^{Nd}$, and $\vh=(h_{j,\mu})\in \C^{Md}$, respectively. Moreover, consider the matrix representation of $\mM$ with respect to the same basis of $\cV$. We identify $\mM$ with its vectorization $\vM\in \C^{d^2}$, i.e., the concatenation of the columns of $\mM$. Then, the constraints \eqref{eq:BT} and \eqref{eq:constraintPjQj} can be written as the $ (2N+d)d \times (2N+d)d$ linear system
\begin{equation}\label{eq:mAmBmC}
\left(\begin{array}{ccc}
\mA^1 & \mA^2 & \mA^3 \\ 
\mB^1 & \mB^2 & \mB^3 \\
\mC^1 & \mC^2 & 0
\end{array}\right)\left(\begin{array}{c}
\vh \\ \ve \\ \vM
\end{array}\right)
=\left(\begin{array}{c}
\mD^1 \vf \\ -\mD^2 \vg \\ \mathbf{0}
\end{array}\right),
\end{equation}
where $\mA^1\in \C^{Nd\times Nd}$, $\mA^2\in \C^{Nd\times Nd}$, $\mA^3\in \C^{Nd\times d^2}$, $\mB^1\in \C^{Nd\times Nd}$, $\mB^2\in \C^{Nd\times Nd}$, $\mB^3\in \C^{Nd\times d^2}$, $\mC^1\in \C^{d^2\times Nd}$, $\mC^2\in \C^{d^2\times Nd}$, $\mD^1\in\C^{Nd\times Nd}$, and $\mD^2\in\C^{Nd\times Nd}$ are defined by 
\begin{equation}\label{eq:mAmBmC2}
\begin{split}
& A^1_{j,\mu;k,\nu}= -2\ii \langle f_j|g_k\rangle \delta_{\mu,\nu}  \alpha(a_j-b_k+\ii\delta), \qquad A^2_{j,\mu;k,\nu}=  2\ii(1-\delta_{j,k}) f_{j,\nu}f_{k,\mu} \alpha(a_j-a_k) \\
&A^3_{j,\mu; \nu,\sigma}= 2f_{j,\sigma}\delta_{\mu,\nu}, \\
& B^1_{j,\mu;k,\nu}= 2\ii (1- \delta_{j,k}) g_{j,\nu}g_{k,\mu} \alpha(b_j-b_k), \qquad B^2_{j,\mu;k,\nu}= -2\ii \langle f_k|g_j\rangle \delta_{\mu,\nu} \alpha(b_j-a_k+\ii\delta),\\
&B^3_{j,\mu;\nu,\sigma}=  -2g_{j,\nu} \delta_{\mu,\sigma},     \\
& C^1_{\mu,\nu;j,\sigma}=    2g_{j,\nu} \delta_{\mu,\sigma}, \qquad C^2_{\mu,\nu;j,\sigma}=    -2f_{j,\mu}\delta_{\nu,\sigma},   \\
&D^1_{j,\mu;k,\nu}=  v_j \delta_{j,k}\delta_{\mu,\nu}, \qquad D^2_{j,\mu;k,\nu}=  w_j \delta_{j,k}\delta_{\mu,\nu}.
\end{split}
\end{equation}

We have observed in numerical experiments that the the square matrix in \eqref{eq:mAmBmC} is generically rank-deficient. Correspondingly, there are conditions on the vector on the right-hand side of \eqref{eq:mAmBmC} for the linear system to be consistent; if $\vf$ and $\vg$ are given, these are linear conditions on $\{v_j,w_j\}_{j=1}^N$. If these conditions are satisfied, the solution of \eqref{eq:mAmBmC} can be determined uniquely. The constraints \eqref{eq:sCMconstraint1} then lead to an overdetermined linear system of equations for the  remaining unknowns in $\{v_j,w_j\}_{j=1}^N$. However, in our numerical experiments, we have found that this system is uniquely solvable. 

\begin{remark}
Our conventions in this section differ slightly from those in \cite[Section~3.1.3]{berntsonlangmannlenells2022}, where Hermitian solutions of the sBO equation with $\mM=0$ are considered. There, bolded vectors are always identified with collections of kets, i.e., $\vf$ is identified with $\{|f_j\rangle\}_{j=1}^N$. We obtain Hermitian solutions of the sncILW equation from \eqref{eq:mAmBmC}--\eqref{eq:mAmBmC2} by setting $b_j=a_j^*$, $f_{j,\mu}=e_{j,\mu}^*$, and $h_{j,\mu}=g_{j,\mu}^*$ for $j=1,\ldots,N$ and $\mu=1,\ldots,d$. Note that in this case, the $(2N+d)d\times (2N+d)d$ matrix in \eqref{eq:mAmBmC} is itself Hermitian. 
\end{remark}

\subsection{Solution of constraints: Cases I-III}\label{subsec:multi-solitonsI--III}
In this subsection, $V(z)$ and $\alpha(z)$ are as in \eqref{eq:V} and \eqref{eq:alpha}  for the rational, trigonometric and hyperbolic cases.
The idea of the previous subsection can be adapted to Cases I--III. In these cases, the constraint \eqref{eq:PjQj} is not present, i.e., we may set $\mC_1=0$ and $\mC_2=0$. This yields an underdetermined system in the variables $\{\vh,\ve,\vM\}$ in \eqref{eq:mAmBmC}. To generate a (generically) consistent system, we rearrange \eqref{eq:mAmBmC} to 
\begin{equation}\label{eq:linearproblemI-III}
\left(\begin{array}{cc}
\mA^1 & \mA^2 \\ 
\mB^1 & \mB^2 	
\end{array}\right)\left(\begin{array}{c} \vh \\ \ve \end{array}\right)
=\left(\begin{array}{c} \mD^1\vf-\mA^3\vM \\ -\mD^2\vg-\mB^3\vM \end{array}\right).
\end{equation}
In Case III, the submatrices appearing in \eqref{eq:linearproblemI-III} are given by \eqref{eq:mAmBmC2} with $\alpha(z)$ in Case III \eqref{eq:alpha}. In Cases I--II, the submatrices appearing in \eqref{eq:linearproblemI-III} are given by \eqref{eq:mAmBmC2} with $\delta\to 0$ and $\alpha(z)$ in Cases I--II \eqref{eq:alpha}. Then, the method described in \cite[Section~3.1.3]{berntsonlangmannlenells2022} can be straightforwardly applied to the generate admissible initial data for Theorems~\ref{thm:sncILW} and \ref{eq:sBO}. 

\noindent
{\bf Acknowledgement} {\it We thank Patrick G\'{e}rard, Rob Klabbers, Enno Lenzmann, Masatoshi Noumi, Anton Ottosson, and Junichi Shiraishi for helpful discussions. We are grateful to Anton Ottosson for carefully reading the manuscript and suggesting improvements.
The work of B.K.B. was supported by the Olle Engkvist Byggm\"{a}stare Foundation, Grant 211-0122. 
E.L. gratefully acknowledges support from the European Research Council, Grant Agreement No.\ 2020-810451.
J.L. acknowledges support from the Ruth and Nils-Erik Stenb\"ack Foundation, the Swedish Research Council, Grant No.\ 2021-03877, and the European Research Council, Grant Agreement No. 682537.
}

\appendix

\section{Special functions}
\label{app:special} 
\subsection{Elliptic functions}
\label{app:elliptic}
We recall the standard definitions of the Weierstrass $\zeta$- and $\wp$-functions with half-periods $(\omega_1,\omega_2)$  \cite[Section 23.2]{DLMF}, 
\begin{equation} 
\zeta(z) \coloneqq \frac1{z} + \sum_{(n,m)\in \Z^2\setminus(0,0)}\left(\frac1{z-2n\omega_1-2m\omega_2} + \frac1{2n\omega_1+2m\omega_2} +  \frac{z}{(2n\omega_1+2m\omega_2)^2}  \right) 
\end{equation} 
and $\wp(z)=-\partial_z\zeta(z)$. The corresponding modified functions are 
\begin{equation} 
\zeta_j(z) \coloneqq \zeta(z)-\frac{\eta_j}{\omega_j}z,\quad \wp_j(z)\coloneqq -\partial_z\zeta_j(z) = \wp(x)+\frac{\eta_j}{\omega_j}\quad (j=1,2) 
\end{equation} 
where $\eta_j\coloneqq \zeta(\omega_j)$. Since $\zeta(z+2\omega_j)=\zeta(z)+2\eta_j$ for $j=1,2$ (see \cite[Eq.~(23.2.11)]{DLMF}), $\zeta_j(z+2\omega_j)=\zeta_j(z)$ for $j=1,2$. 
In the main text we have $(\omega_1,\omega_2)=(\ell,\ii\delta)$. 

Our definitions imply  
\begin{equation} 
\zeta_2(z)=\zeta_1(z)+\gamma_0 z
\end{equation} 
with $\gamma_0=\eta_1/\omega_1-\eta_2/\omega_2=(\eta_1\omega_2-\eta_2\omega_1)/(\omega_1\omega_2)$. Using  the well-known identity $\eta_1\omega_2-\eta_2\omega_1 = \frac12\pi\ii$ \cite[Eq.~(23.2.14)]{DLMF}, we obtain 
\begin{equation} 
\gamma_0 = \frac{\pi\ii}{2\omega_1\omega_2}, 
\end{equation} 
which for $(\omega_1,\omega_2)=(\ell,\ii\delta)$ gives \eqref{eq:cc}. 
 
\subsection{Functional identities}
\label{app:functional} 
We state and outline proofs of several well-known identities involving the special functions $\zeta_2(z)$, $\wp_2(z)$, and $\varkappa(z)$ that we use (more detailed proofs of these identities can be found in \cite{berntsonlangmann2021}, for example). Throughout this subsection, $z$ is a complex variable.

First, the functions $\zeta_2(z)$ and $\wp_2(z)$ are odd and even, respectively:
\begin{equation}\label{eq:parity}
\zeta_2(-z)=-\zeta_2(z),\qquad \wp_2(-z)=\wp_2(z) 
\end{equation}
(this is obvious from the definitions). 
Second,
\begin{equation}
\label{eq:Id1}
\wp_2(z)=-\partial_z\zeta_2(z)=\zeta_2(z)^2-\varkappa(z) 
\end{equation}
(this is implied by the definitions \eqref{eq:wp2f2}). 
Third, 
\begin{multline}
\label{eq:Id2}
\zeta_2(a-b)\zeta_2(b-c)+\zeta_2(b-c)\zeta_2(c-a)+\zeta_2(c-a)\zeta_2(a-b)\\
=-\frac12\big(\varkappa(a-b)+\varkappa(b-c)+\varkappa(c-a)\big)-\frac{3\eta_2}{2\ii\delta} \quad (a,b,c\in\C)
\end{multline}
(to get this, start with the well-known identity $\left( \zeta(x)+\zeta(y)+\zeta(z)\right)^2=\wp(x)+\wp(y)+\wp(z)$ for $x+y+z=0$, specialize to $(x,y,z)=(a-b,b-c,c-a)$, use definitions \eqref{eq:zetaj} and \eqref{eq:wp2f2} to write this as 
\begin{equation} 
\left( \zeta_2(a-b)+\zeta_2(b-c)+\zeta(c-a)\right)^2 = \wp_2(a-b)+\wp_2(b-c)+\wp_2(c-a)-\frac{3\eta_2}{\ii\delta}, 
\end{equation} 
and use the identity $\wp_2(z)=\zeta_2(z)^2-\varkappa(z)$
to obtain \eqref{eq:Id2}). 
Finally, the function $\zeta_2(z)$ is quasi-periodic with respect to $2\ell$ and periodic with respect to $2\ii\delta$:
\begin{equation}
\label{eq:zeta2periodic}
\zeta_2(z+2\ell)=\zeta_2(z)+\frac{\pi}{\delta},\qquad \zeta_2(z+2\ii\delta)=\zeta_2(z)  
\end{equation}
(this follows from the well-known identities $\zeta(z+2\omega_j)=\zeta(z)+2\eta_j$ for $j=1,2$ and the definition of $\zeta_2(z)$)
while $\wp_2(z)$ is $2\ell$- and $2\ii\delta$-periodic:
\begin{equation}\label{eq:wp2periodic}
\wp_2(z+2\ell)=\wp_2(z),\qquad \wp_2(z+2\ii\delta)=\wp_2(z). 
\end{equation}

 \section{Proofs}\label{app:Proofs}

 \subsection{Proof of Lemma~\ref{lem:consistency}}\label{app:lemconsistency}
 
 We use the shorthand notation \eqref{eq:shorthandsolitons} and 
\begin{equation}\label{eq:Bj}
 \mB_j\coloneqq \mM+\ii\sum_{k\neq j}^{\cN} r_k \mP_k\alpha(a_j-a_k) \quad (j=1,\ldots,\cN)
\end{equation}
to write \eqref{eq:BT1} as 
\begin{equation}
\label{eq:BTshort} 
\begin{split}  
\dot a_j\langle f_j| = &\;  2\langle f_j|\mB_j\quad (j=1,\ldots,N),      \\
\dot a_j|e_j\rangle = &\; 2\mB_j|e_j\rangle \quad (j=N+1,\ldots,\cN).
\end{split}
\end{equation} 

Moreover, \eqref{eq:Pjdotsolgen} becomes
\begin{equation}\label{eq:PjdotSH}
\dot{\mP}_j=-\ii\sum_{j=1}^{\cN} (1+r_jr_k)[\mP_j,\mP_k]V(a_j-a_k);	
\end{equation}
\eqref{eq:sCMef} and \eqref{eq:sCMgh} become
\begin{equation} 
\begin{split} 
\label{eq:sCM3b} 
|\dot e_j\rangle &= \ii\sum_{k\neq j}^\cN (1+r_jr_k) \mP_k|e_j\rangle V(a_j-a_k),\\
\langle\dot f_j| & = -\ii\sum_{k\neq j}^\cN(1+r_jr_k)\langle f_j|\mP_k V(a_j-a_k), 
\end{split} \quad (j=1,\ldots,\cN);
\end{equation}
\eqref{eq:sCMconstraint1} becomes
\begin{equation}\label{eq:efSH}
\langle e_j | f_j \rangle=1 \quad (j=1,\ldots,\cN);
\end{equation}
and \eqref{eq:Mdotgen} becomes
\begin{equation}\label{eq:MdotSH}
\dot{\mM}=-\frac14 \sum_{k=1}^{\cN} \sum_{l\neq k}^{\cN}(r_k+r_l) [\mP_k,\mP_l]\varkappa'(a_k-a_l).
\end{equation}

By right-multiplying the first set of equations in \eqref{eq:BTshort} by $|e_j\rangle$ and left-multiplying the second set of equations in \eqref{eq:BTshort} by $\langle f_j|$, we obtain
\begin{equation}\label{eq:BTscalar}
\begin{split}
\dot{a}_j=&\; 2\langle f_j|\mB_j| e_j\rangle \quad (j=1,\ldots,\cN).
\end{split}
\end{equation}
By the Picard-Lindel\"{o}f theorem, the system of equations consisting of \eqref{eq:sCM3b}, \eqref{eq:MdotSH}, and \eqref{eq:BTscalar} with the given initial data has a unique local solution. This solution may be extended as long as (i) no solution variable tends to infinity and (ii) \eqref{eq:ajak} holds (see, e.g., \cite[Corollary~3.2]{hartman1982}), up to a maximal time $\tau_{\mathrm{max}}\in (0,\infty)\cup\{\infty\}$. Moreover, this \textit{maximal solution} is unique on $[0,\tau_{\mathrm{max}})$ (see, e.g., \cite[Theorem~8.1]{hartman1982}). 

Let $\mM$ and $\{a_j,|e_j\rangle,\langle f_j|\}_{j=1}^{\cN}$ be this maximal solution. It follows from \eqref{eq:BTscalar} that
\begin{equation}\label{eq:ajPj3}
\begin{split}
\dot{a}_j\langle f_j|=&\; 2\langle f_j|\mB_j\mP_j \quad(j=1,\ldots,N), \\
\dot{a}_j|e_j\rangle =&\; 2\mP_j\mB_j|e_j\rangle \quad (j=N+1,\ldots,\cN).
\end{split}
\end{equation}
holds on $[0,\tau_{\mathrm{max}})$. The overdetermined system of equations for $\{a_j\}_{j=1}^{\cN}$ \eqref{eq:BTshort} will also be satisfied on $[0,\tau_{\mathrm{max}})$ if the difference between the right-hand sides of \eqref{eq:BTshort} and \eqref{eq:ajPj3} vanish on $[0,\tau_{\mathrm{max}})$. These differences are given by 
\begin{equation}\label{eq:FjEj} 
\begin{split}
\langle F_j|\coloneqq &\; \langle f_j|\mB_j-\langle f_j|\mB_j|e_j\rangle\langle f_j| =\langle f_j|\mB_j(1-\mP_j)\quad (j=1,\ldots,N), \\
|E_{j}\rangle \coloneqq &\;  \mB_j|e_j\rangle -|e_j\rangle\langle f_j|\mB_j|e_j\rangle=(1-\mP_j)\mB_j|e_j\rangle \quad (j=N+1,\ldots,\cN).
\end{split}	
\end{equation}

We will show that the time evolution of the quantities $\{\langle F_j|\}_{j=1}^N$ and $\{|E_{N+j}\rangle\}_{j=1}^N$ is determined by a linear, homogeneous (in both $\{\langle F_j|\}_{j=1}^N$ and $\{|E_{N+j}\rangle\}_{j=1}^N$) system of ordinary differential equations. The precise form of this system is not needed to establish our result, and so we introduce an equivalence relation $\simeq$ between two expressions that differ only by terms linear in $\{\langle F_j|\}_{j=1}^N$ and $\{|E_{N+j}\rangle\}_{j=1}^N$ with regular coefficients (the regularity of all such coefficients is guaranteed by conditions (i), (ii) in the discussion of maximal solutions above). Thus we only need to show that
\begin{equation}\label{eq:FjEjODEs}
\begin{split}
\langle \dot{F}_j|\simeq &\; 0	 \quad (j=1,\ldots,N), \\
|\dot{E}_{j}\rangle \simeq &\; 0 \quad (j=N+1,\ldots,\cN). 
\end{split}
\end{equation}
However, in the system of equations given by \eqref{eq:sCMef}, \eqref{eq:sCMgh}, \eqref{eq:Mdotgen}, \eqref{eq:sCMconstraint1}, \eqref{eq:constraintPjQj}, and \eqref{eq:BT1}, the variables $\{a_j,|e_j\rangle,\langle f_j|\}_{j=1}^N$ and $\{b_j,\langle h_j|, |g_j\rangle\}_{j=1}^N$ can be swapped by Hermitian conjugation and consequently, $\{\langle F_j|\}_{j=1}^N$ and $\{|E_{N+j}\rangle\}_{j=1}^N$ can be interchanged using this same symmetry. For this reason, it suffices to show that the first set of equations in \eqref{eq:FjEjODEs} holds.   

We differentiate the first equation in \eqref{eq:FjEj} with respect to time, which gives (for notational simplicity, we suppress the $j$-dependence of the quantities $\langle C_1|$, $\langle C_2|$, $\langle C_3|$)
\begin{equation}	
\langle \dot{F}_j|= \langle C_1|+\langle C_2|+\langle C_3|,
\end{equation}
where
\begin{equation}\label{eq:C1C2C3}
\begin{split}
\langle C_1|=&\; \langle \dot{f}_j|\mB_j(1-\mP_j), \\
\langle C_2|=&\; -\langle f_j|\mB_j\dot{\mP}_j, \\
\langle C_3|=&\; \langle f_j|\dot{\mB}_j(1-\mP_j). 
\end{split}	
\end{equation}
We compute each of these quantities in turn (note that $r_j=1$ in what follows). 

By inserting \eqref{eq:sCM3b} into $\langle C_1|$ in \eqref{eq:C1C2C3}, we find
\begin{align}\label{eq:C1}
\langle C_1|=&\; -\ii \sum_{k\neq j}^{\cN}(1+r_k)\langle f_j|\mP_k\mB_j(1-\mP_j) V(a_j-a_k)	
\end{align}
and, similarly, by inserting \eqref{eq:PjdotSH} into $\langle C_2|$ in \eqref{eq:C1C2C3}, we compute
\begin{align}\label{eq:C2}
\langle C_2|=&\; \ii\sum_{k\neq j}^{\cN}(1+r_k)\langle f_j|\mB_j [\mP_j,\mP_k]V(a_j-a_k) \nonumber \\
=&\; -\ii\sum_{k\neq j}^{\cN} (1+r_k)\langle f_j|\mB_j[1-\mP_j,\mP_k]V(a_j-a_k) \nonumber \\
=&\; -\ii\sum_{j=1}^{\cN}(1+r_k)\langle F_j|\mP_kV(a_j-a_k)+\ii\sum_{k\neq j}^{\cN}(1+r_k)\langle f_j|\mB_j\mP_k(1-\mP_j)V(a_j-a_k).
\end{align}
Then, from \eqref{eq:C1} and \eqref{eq:C2}, it follows that
\begin{align}\label{eq:C1C2sum}
\langle C_1|+\langle C_2|\simeq &\;  -\ii\sum_{k\neq j}^{\cN} (1+r_k)\langle f_j|[\mP_k,\mB_j](1-\mP_j)V(a_j-a_k). 
\end{align}

To compute $\langle C_3|$ in \eqref{eq:C1C2C3}, we differentiate the expression for $\mB_j$ in \eqref{eq:Bj} to write
\begin{equation}
\langle C_3|= \langle C_{3,1}|+\langle C_{3,2}|+\langle C_{3,3}|	,
\end{equation}
where
\begin{equation}\label{eq:C31C32C33}
\begin{split}
\langle C_{3,1}| \coloneqq &\; \langle f_j|\dot{\mM}(1-\mP_j), \\
\langle C_{3,2}| \coloneqq &\; \ii\sum_{k\neq j}^{\cN} r_k \langle f_j|\dot{\mP}_k(1-\mP_j) \alpha(a_j-a_k), \\
\langle C_{3,3}| \coloneqq &\; -\ii \sum_{k\neq j}^{\cN} r_k \langle f_j|\mP_k(1-\mP_j)(\dot{a}_j-\dot{a}_k)V(a_j-a_k).
\end{split}	
\end{equation}

Using \eqref{eq:MdotSH} in $\langle C_{3,1}|$ in \eqref{eq:C31C32C33} gives
\begin{align}\label{eq:C31}
\langle C_{3,1}|=&\; -\frac14\sum_{k=1}^{\cN}\sum_{l\neq k}^{\cN} (r_k+r_l)	 \langle f_j| [\mP_k,\mP_l](1-\mP_j)\varkappa'(a_k-a_l).  
\end{align}

Next, by inserting \eqref{eq:PjdotSH} into $\langle C_{3,2}|$ in \eqref{eq:C31C32C33}, we compute
\begin{align}\label{eq:C32}
\langle C_{3,2}|=&\; \sum_{k\neq j}^{\cN} \sum_{l\neq k}^{\cN} r_k(1+r_kr_l)\langle f_j|[\mP_k,\mP_l](1-\mP_j) \alpha(a_j-a_k)V(a_k-a_l) \nonumber\\
=&\; - \sum_{k\neq j}^{\cN} (1+r_k)\langle f_j|	[\mP_j,\mP_k](1-\mP_j)\alpha(a_j-a_k)V(a_j-a_k) \nonumber\\
&\; + \sum_{k\neq j}^{\cN}\sum_{l\neq j,k}^{\cN} (r_k+r_l)\langle f_j|[\mP_k,\mP_l](1-\mP_j)\alpha(a_j-a_k)V(a_k-a_l) \nonumber\\
=&\; -\sum_{k\neq j}^{\cN} (1+r_k)\langle f_j|\mP_j\mP_k(1-\mP_j)\alpha(a_j-a_k)V(a_j-a_k) \nonumber \\
&\; + \sum_{k\neq j}^{\cN}\sum_{l\neq j,k}^{\cN} (r_k+r_l)\langle f_j|[\mP_k,\mP_l](1-\mP_j)\alpha(a_j-a_k)V(a_k-a_l).
\end{align}
We rewrite the double sum as follows
\begin{multline}
	\sum_{k\neq j}^{\cN}\sum_{l\neq j,k}^{\cN} (r_k+r_l)\langle f_j|[\mP_k,\mP_l](1-\mP_j)\alpha(a_j-a_k)V(a_k-a_l) \\
	=\frac12\sum_{k\neq j}^{\cN}\sum_{l\neq j,k}^{\cN} (r_k+r_l)\langle f_j|[\mP_k,\mP_l](1-\mP_j)\big(\alpha(a_j-a_k)-\alpha(a_j-a_l)\big)V(a_k-a_l) \\
	=\sum_{k\neq j}^{\cN}\sum_{l\neq j,k}^{\cN} r_l\langle f_j|[\mP_k,\mP_l](1-\mP_j)\big(\alpha(a_j-a_k)-\alpha(a_j-a_l)\big)V(a_k-a_l);
\end{multline}
using also that $\langle f_j|\mP_j=\langle f_j|$ this gives
\begin{align}\label{eq:C32,2}
\langle C_{3,2}|=&\; -\sum_{k\neq j}^{\cN} (1+r_k)\langle f_j|\mP_k(1-\mP_j)\alpha(a_j-a_k)V(a_j-a_k) \nonumber \\
&\; +\sum_{k\neq j}^{\cN}\sum_{l\neq j,k}^{\cN} r_l\langle f_j|[\mP_k,\mP_l](1-\mP_j)\big(\alpha(a_j-a_k)-\alpha(a_j-a_l)\big)V(a_k-a_l).	
\end{align}

To compute $\langle C_{3,3}|$, we use the following relations, which follow from  \eqref{eq:BTscalar} and \eqref{eq:FjEj},
\begin{equation}\label{eq:BTEjFj}
\begin{split}	
\dot{a}_j\langle f_j|= &\; -2\langle F_j|+2\langle f_j|\mB_j \simeq 2\langle f_j|\mB_j  \quad (j=1,\ldots,N), \\	
\dot{a}_j |e_j\rangle= &\; -2|E_{j}\rangle+2\mB_j|e_j\rangle \simeq 2\mB_j|e_j\rangle   \quad (j=N+1,\ldots,\cN).
\end{split}	
\end{equation}
Note that \eqref{eq:BTEjFj} implies 
\begin{equation}\label{eq:BTEjFj2}
\dot{a}_j\mP_j\simeq (1+r_j)\mP_j\mB_j+(1-r_j)\mB_j\mP_j 	\quad (j=1,\ldots,\cN).
\end{equation}

Using \eqref{eq:BTEjFj} and \eqref{eq:BTEjFj2} in $\langle C_{3,3}|$ in \eqref{eq:C31C32C33}, we compute
\begin{align}\label{eq:C33}
\langle C_{3,3}| \simeq &\; -2\ii\sum_{k\neq j}^{\cN} r_k \langle f_j|\mB_j\mP_k(1-\mP_j)V(a_j-a_k) \nonumber \\
&\; +\ii \sum_{k\neq j}^{\cN} \langle f_j| \big((1+r_k)\mP_k\mB_k-(1-r_k)\mB_k\mP_k)(1-\mP_j)V(a_j-a_k) \nonumber \\
=&\; -2\ii\sum_{k\neq j}^{\cN} r_k\langle f_j|(\mB_j-\mB_k)\mP_k(1-\mP_j)V(a_j-a_k) \nonumber \\
&\; +\ii\sum_{k\neq j}^{\cN} (1+r_k)\langle f_j|[\mP_k,\mB_k](1-\mP_j)V(a_j-a_k). 
\end{align}

By combining \eqref{eq:C1C2sum}, \eqref{eq:C32,2}, and \eqref{eq:C33}, we arrive at
\begin{align}\label{eq:C1C2C32C33}
&\langle C_1|+\langle C_2|+\langle C_{3,2}|+\langle C_{3,3}| \nonumber \\
&\simeq  -\sum_{k\neq j}^{\cN} (1+r_k)\langle f_j|\mP_k(1-\mP_j)\alpha(a_j-a_k)V(a_j-a_k) \nonumber \\
&\;\phantom{\simeq} +\sum_{k\neq j}^{\cN}\sum_{l\neq j,k}^{\cN} r_l\langle f_j|[\mP_k,\mP_l](1-\mP_j)\big(\alpha(a_j-a_k)-\alpha(a_j-a_l)\big)V(a_k-a_l) \nonumber \\
&\;\phantom{\simeq}-2\ii\sum_{k\neq j}^{\cN} r_k\langle f_j|(\mB_j-\mB_k)\mP_k(1-\mP_j)V(a_j-a_k) \nonumber \\
&\;\phantom{\simeq} -\ii\sum_{k\neq j}^{\cN} (1+r_k)\langle f_j|[\mP_k,\mB_j-\mB_k](1-\mP_j)V(a_j-a_k). 
\end{align}

To proceed, we compute a convenient expression for $\mB_j-\mB_k$ directly from \eqref{eq:Bj},
\begin{align}\label{eq:BjBk}
\mB_j-\mB_k=&\; \ii \sum_{l\neq j}^{\cN}	 r_l\mP_l \alpha(a_j-a_l)-\ii \sum_{l\neq k}^{\cN}	 r_l\mP_l \alpha(a_k-a_l) \nonumber\\
=&\; \ii(r_k\mP_k+\mP_j)\alpha(a_j-a_k)+\ii\sum_{l\neq j,k}^{\cN} r_l\mP_l\big(\alpha(a_j-a_l)-\alpha(a_k-a_l)\big).
\end{align}

The final two lines of \eqref{eq:C1C2C32C33} can be written, using \eqref{eq:BjBk} and the relations $\langle f_j | \mP_j = \langle f_j|$ and $\mP_j(1 - \mP_j) = 0$, as
\begin{align}\label{eq:C1C2C32C33,2}
&2\sum_{k\neq j}^{\cN}(1+r_k) \langle f_j|\mP_k(1-\mP_j)\alpha(a_j-a_k)V(a_j-a_k)\nonumber \\
& +2\sum_{k\neq j}^{\cN}\sum_{l\neq j,k}^{\cN} r_kr_l\langle f_j|\mP_l\mP_k (1-\mP_j) \big(\alpha(a_j-a_l)-\alpha(a_k-a_l)\big)V(a_j-a_k) \nonumber\\
& - \sum_{k\neq j}^{\cN} (1+r_k)\langle f_j|\mP_k(1-\mP_j)\alpha(a_j-a_k)V(a_j-a_k) \nonumber\\
&+\sum_{k\neq j}^{\cN}\sum_{l\neq j,k}^{\cN} (1+r_k)r_l\langle f_j|[\mP_k,\mP_l](1-\mP_j)\big(\alpha(a_j-a_l)-\alpha(a_k-a_l)\big)V(a_j-a_k);
\end{align}
inserting this into \eqref{eq:C1C2C32C33} leads to cancellation of diagonal terms,
\begin{align}\label{eq:C1C2C32C33,3}
&\langle C_1|+\langle C_2|+\langle C_{3,2}|+\langle C_{3,3}| \nonumber \\
&\simeq\sum_{k\neq j}^{\cN}\sum_{l\neq j,k}^{\cN} r_l\langle f_j|[\mP_k,\mP_l](1-\mP_j)\big(\alpha(a_j-a_k)-\alpha(a_j-a_l)\big)V(a_k-a_l) \nonumber \\
&\phantom{\simeq} +\sum_{k\neq j}^{\cN}\sum_{l\neq j,k}^{\cN} (1+r_k)r_l\langle f_j|\mP_k\mP_l(1-\mP_j)\big(\alpha(a_j-a_l)-\alpha(a_k-a_l)\big)V(a_j-a_k) \nonumber \\
&\phantom{\simeq} -\sum_{k\neq j}^{\cN}\sum_{l\neq j,k}^{\cN} (1-r_k)r_l\langle f_j|\mP_l\mP_k(1-\mP_j)\big(\alpha(a_j-a_l)-\alpha(a_k-a_l)\big)V(a_j-a_k) \nonumber \\
&= \sum_{k\neq j}^{\cN}\sum_{l\neq j,k}^{\cN} r_l\langle f_j|[\mP_k,\mP_l](1-\mP_j) \nonumber \\
&\phantom{\sum_{k\neq j}^{\cN}\sum_{l\neq j,k}^{\cN}}\times\big(\big(\alpha(a_j-a_k)-\alpha(a_j-a_l)\big)V(a_k-a_l) +\big(\alpha(a_j-a_l)-\alpha(a_k-a_l)\big)V(a_j-a_k)\big) \nonumber \\
&\phantom{=}+\sum_{k\neq j}^{\cN}\sum_{l\neq j,k}^{\cN} r_kr_l \langle f_j|\{\mP_k,\mP_l\}(1-\mP_j)\big(\alpha(a_j-a_l)-\alpha(a_k-a_l)\big)V(a_j-a_k).
\end{align}

Using the identity
\begin{align}\label{eq:alphaId2}
\big(\alpha(a_j-a_l)-\alpha(a_k-a_l)\big)V(a_j-a_k)=&\; -\big(\alpha(a_j-a_k)-\alpha(a_j-a_l)\big)V(a_k-a_l) \nonumber \\
&\; -\frac12\big(\varkappa'(a_j-a_k)-\varkappa'(a_k-a_l)\big),
\end{align}
which can be obtained by differentiating \eqref{eq:Id2} with respect to $b$ and setting $a=a_j$, $b=a_k$, and $c=a_l$, in \eqref{eq:C1C2C32C33,3} gives
\begin{align}\label{eq:C1C2C32C33,4}
&\langle C_1|+\langle C_2|+\langle C_{3,2}|+\langle C_{3,3}| \nonumber\\
&\simeq -\frac12\sum_{k\neq j}^{\cN}\sum_{l\neq j,k}^{\cN} r_l\langle f_j|[\mP_k,\mP_l](1-\mP_j)\big(\varkappa'(a_j-a_k)-\varkappa'(a_k-a_l)\big) \nonumber\\
&\;\phantom{\simeq}-\sum_{k\neq j}^{\cN}\sum_{l\neq j,k}^{\cN} r_kr_l \langle f_j| \{\mP_k,\mP_l\}(1-\mP_j)\big(\alpha(a_j-a_k)-\alpha(a_j-a_l)\big)V(a_k-a_l) \nonumber\\
&\;\phantom{\simeq} -\frac12 \sum_{k\neq j}^{\cN}\sum_{l\neq j,k}^{\cN} r_kr_l\langle f_j|\{\mP_k,\mP_l\} (1-\mP_j)\big(\varkappa'(a_j-a_k)-\varkappa'(a_k-a_l)\big).
\end{align}

The second double sum in \eqref{eq:C1C2C32C33,4} vanishes by symmetry, as does the part of the third double sum proportional to $\varkappa'(a_k-a_l)$. We use
\begin{equation}\label{eq:sumconstraint4SH}
\sum_{l\neq j,k}^{\cN} r_l\mP_l=-\mP_j-r_k\mP_k,
\end{equation}
a consequence of \eqref{eq:sumconstraint4}, to simplify what remains. This allows us to compute
\begin{align}
&\langle C_1|+\langle C_2|+\langle C_{3,2}|+\langle C_{3,3}| \nonumber \\
&\simeq \frac12 \sum_{k\neq j}^{\cN} \langle f_j|[\mP_k, \mP_j+r_k\mP_k](1-\mP_j)\varkappa'(a_j-a_k)+\frac12\sum_{k\neq j}^{\cN}\sum_{l\neq j,k}^{\cN} r_l\langle f_j|[\mP_k,\mP_l](1-\mP_j)\varkappa'(a_k-a_l)	\nonumber \\
&\;\phantom{\simeq} +\frac12\sum_{k\neq j}^{\cN} r_k\langle f_j|\{\mP_k,\mP_j+r_k\mP_k\}(1-\mP_j)\varkappa'(a_j-a_k) \nonumber \\
&= -\frac12 \sum_{k\neq j}^{\cN}  \langle f_j|\mP_k(1-\mP_j)\varkappa'(a_j-a_k)+\frac12\sum_{k\neq j}^{\cN}\sum_{l\neq j,k}^{\cN} r_l\langle f_j|[\mP_k,\mP_l](1-\mP_j)\varkappa'(a_k-a_l) \nonumber \\
&\;\phantom{=} +\frac12 \sum_{k\neq j}^{\cN} (2+r_k)\langle f_j|\mP_k(1-\mP_j)\varkappa'(a_j-a_k) \nonumber \\
&= \frac12 \sum_{k\neq j}^{\cN}  (1+r_k)\langle f_j|\mP_k(1-\mP_j)\varkappa'(a_j-a_k)+\frac12\sum_{k\neq j}^{\cN}\sum_{l\neq j,k}^{\cN} r_l\langle f_j|[\mP_k,\mP_l](1-\mP_j)\varkappa'(a_k-a_l).
\end{align}
Hence, using that $r_j=1$, we get
\begin{align} 
&\langle C_1|+\langle C_2|+\langle C_{3,2}|+\langle C_{3,3}| \nonumber \\
&\simeq\frac12 \sum_{k\neq j}^{\cN}  (r_j+r_k)\langle f_j|\mP_k(1-\mP_j)\varkappa'(a_j-a_k)+\frac12\sum_{k\neq j}^{\cN}\sum_{l\neq j,k}^{\cN} r_l\langle f_j|[\mP_k,\mP_l](1-\mP_j)\varkappa'(a_k-a_l) \nonumber\\
&= \frac12\sum_{k=1}^{\cN}\sum_{l\neq k}^{\cN} r_l\langle f_j|[\mP_k,\mP_l](1-\mP_j)\varkappa'(a_k-a_l) \nonumber\\
&=\frac14\sum_{k=1}^{\cN}\sum_{l\neq k}^{\cN} (r_k+r_l)\langle f_j|[\mP_k,\mP_l](1-\mP_j)\varkappa'(a_k-a_l)
\end{align}
and from \eqref{eq:C31} it follows that
\begin{equation}	
\langle C_1|+\langle C_2|+\langle C_{3,1}|+\langle C_{3,2}|+\langle C_{3,3}|=\langle C_1|+\langle C_2|+\langle C_3| \simeq 0.
\end{equation}

We have thus shown that \eqref{eq:FjEjODEs} holds; a unique solution to the initial value problem consisting of \eqref{eq:FjEjODEs} with initial conditions $\langle F_j(0)|= 0$, $|E_{N+j}(0)\rangle=0$, $j= 1, \dots, N$ is given by $\langle F_j(t)|=0$, $|E_{N+j}(t)\rangle=0$, $j = 1, \dots, N$, for $t \in [0,\tau_{\mathrm{max}})$. It follows that $\mM$ and $\{a_j,|e_j\rangle,\langle f_j|\}_{j=1}^{\cN}$ uniquely solves the initial value problem of Lemma~\ref{lem:consistency} on $[0,\tau_{\mathrm{max}})$.

 \subsection{Proof of Lemma~\ref{lem:BT}}\label{app:lemBT}

In the system of equations given by \eqref{eq:sCMa}--\eqref{eq:sCMgh},
\eqref{eq:Mdotgen}, \eqref{eq:sCMconstraint1}, \eqref{eq:constraintPjQj}, and \eqref{eq:BT1}, the variables $\{a_j,|e_j\rangle,\langle f_j|\}_{j=1}^N$ and $\{b_j, \langle h_j|, |g_j\rangle\}_{j=1}^N$ can be swapped by Hermitian conjugation. Due to this symmetry, it suffices to verify the claim for the first set of variables, i.e., it is enough to show that \eqref{eq:sCMa} follows from \eqref{eq:sCMef}, \eqref{eq:sCMgh}, \eqref{eq:Mdotgen}, and \eqref{eq:BT1} subject to \eqref{eq:sCMconstraint1} and \eqref{eq:constraintPjQj}.

By differentiating the first set of equations in \eqref{eq:BTshort} with respect to time, we obtain 
\begin{equation} 
\label{eq:ddotajfj1}
\ddot a_j\langle f_j| =   \langle \dot f_j|(2\mB_j-\dot a_j) + 2\langle f_j|\dot \mB_j
\end{equation} 
where, here and below in this section, $j=1,\ldots,N$ (note that $r_j=+1$). Using \eqref{eq:Bj}, \eqref{eq:BTshort}, and \eqref{eq:sCM3b} and by nearly-identical calculations as in the proof of \cite[Proposition~2.1]{berntsonlangmannlenells2022}, \eqref{eq:ddotajfj1} may be written as \footnote{
Using \cite[Eq.~(2.31)]{berntsonlangmannlenells2022}, the right-hand side of of \eqref{eq:ddotaj} is seen to match that of \cite[Eq.~(2.29)]{berntsonlangmannlenells2022}. The new addition on the left-hand side, the term proportional to $\dot{\mM}$, comes from differentiating \eqref{eq:Bj} with respect to time and substituting the resulting expression into \eqref{eq:ddotajfj1}.
}
\begin{align}
\label{eq:ddotaj}
\langle f_j|(\ddot a_j-2\dot{\mM})  = &\;     \sum_{k\neq j}^\cN \sum_{l\neq j,k}^{\cN}(r_k+r_l)  \langle f_j| [\mP_k,\mP_l]\big(\alpha(a_j-a_k)-\alpha(a_j-a_l)\big)V(a_k-a_l) \nonumber\\
&\; +4\sum_{k\neq j}^\cN (1+r_k)\langle f_j| \mP_k\mP_j \alpha(a_j-a_k)V(a_j-a_k) \nonumber\\ 
&\; + 2\sum_{k\neq j}^\cN \sum_{l\neq j,k}^{\cN} \langle f_j|\big( r_kr_l \{\mP_k,\mP_l\} +r_l[\mP_k,\mP_l]) 
\big( \alpha(a_j-a_l) - \alpha(a_k-a_l) \big)V(a_j-a_k).   
\end{align}

We will rewrite \eqref{eq:ddotaj} using the identities 
\begin{equation}\label{eq:BId1}
\alpha(z)V(z)=-\frac12\big(V'(z)+\varkappa'(z)\big),
\end{equation}
which can be obtained by differentiating \eqref{eq:Id1} with respect to $z$, and \eqref{eq:alphaId2}. By inserting \eqref{eq:BId1} and \eqref{eq:alphaId2} into \eqref{eq:ddotaj} and rearranging, we obtain 
\begin{align}\label{eq:ddotaj2}
\ddot{a}_j \langle f_j| =&\; 2\langle f_j| \dot{\mM} -2\sum_{k\neq j}^{\cN} (1+r_k)\langle f_j |\mP_k\mP_j V'(a_j-a_k)  \nonumber \\
&\; +\sum_{k\neq j}^{\cN}\sum_{l\neq k}^{\cN}  (r_k-r_l) \langle f_j| [\mP_k,\mP_l] \big(\alpha(a_j-a_k)-\alpha(a_j-a_l)\big)V(a_k-a_l) \nonumber \\
&\; -2\sum_{k\neq j}^{\cN}\sum_{l\neq j,k}^{\cN} r_kr_l \langle f_j| \{\mP_k,\mP_l\}\big(\alpha(a_j-a_k)-\alpha(a_j-a_l)\big)V(a_k-a_l) \nonumber \\
&\; -2\sum_{k\neq j}^{\cN} (1+r_k)\langle f_j |\mP_k\mP_j \varkappa'(a_j-a_k) -\sum_{k\neq j}^{\cN}\sum_{l\neq j,k}^{\cN} \langle f_j| \big(r_kr_l \{\mP_k,\mP_l\}+r_l [\mP_k,\mP_l]\big)\varkappa'(a_j-a_k) \nonumber \\
&\; +\sum_{k\neq j}^{\cN}\sum_{l\neq j,k}^{\cN}r_kr_l \langle f_j | \{\mP_k,\mP_l\} \varkappa'(a_k-a_l) +\sum_{k\neq j}^{\cN}\sum_{l\neq j,k}^{\cN} r_l \langle f_j | [\mP_k,\mP_l] \varkappa'(a_k-a_l).
\end{align}
Since $V(z)$ is even, the double sums in the second and third lines of \eqref{eq:ddotaj2} each vanish by symmetry and since $\varkappa'(z)$ is odd, the first double sum in the final line of \eqref{eq:ddotaj2} vanishes by symmetry.

To simplify further, we recall the constraint \eqref{eq:sumconstraint4SH}, which inserted into the double sum in the fourth line of \eqref{eq:ddotaj2} gives 
\begin{multline}\label{eq:ddotaj3}
-\sum_{k\neq j}^{\cN}\sum_{l\neq j,k}^{\cN} \langle f_j| \big(r_kr_l \{\mP_k,\mP_l\}+r_l [\mP_k,\mP_l]\big)\varkappa'(a_j-a_k) \\
=\sum_{k\neq j}^{\cN}\langle f_j |\big(r_k \{\mP_k,\mP_j+r_k\mP_k\}+ [\mP_k,\mP_j+r_k\mP_k]\big)\varkappa'(a_j-a_k) \\
=\sum_{k\neq j}^{\cN}\langle f_j| \big(2\mP_k^2 +r_k \{\mP_j,\mP_k\}- [\mP_j,\mP_k]\big)\varkappa'(a_j-a_k) \\
=\sum_{k\neq j}^{\cN}(1+r_k)\langle f_j| \{\mP_j,\mP_k\}\varkappa'(a_j-a_k),
\end{multline}
where we have used $\mP_k^2=\mP_k$ and $\langle f_j| \mP_j =\langle f_j|$ in the final step (both are consequences of \eqref{eq:efSH}).
Hence, inserting \eqref{eq:ddotaj3} into \eqref{eq:ddotaj2} and simplifying, we have
\begin{align}\label{eq:ddotaj4}
\ddot{a}_j \langle f_j|=&\; 2\langle f_j|\dot{\mM} -2\sum_{k\neq j}^{\cN}(1+r_k)  \langle f_j|\mP_k\mP_j V'(a_j-a_k) \nonumber\\
&\; +\sum_{k\neq j}^{\cN}(1+r_k)\langle f_j| [\mP_j,\mP_k] \varkappa'(a_j-a_k) +\sum_{k\neq j}^{\cN}\sum_{l\neq j,k}^{\cN}r_l \langle f_j|  [\mP_k,\mP_l]\varkappa'(a_k-a_l). 
\end{align}
By symmetrizing the remaining double sum and using the fact that $\varkappa'(z)$ is an odd function, we see that
\begin{multline}
\sum_{k\neq j}^{\cN}\sum_{l\neq j,k}^{\cN}r_l \langle f_j|  [\mP_k,\mP_l]\varkappa'(a_k-a_l)= \frac12 \sum_{k\neq j}^{\cN}\sum_{l\neq j,k}^{\cN}(r_k+r_l) \langle f_j|  [\mP_k,\mP_l]\varkappa'(a_k-a_l) \\
=-\sum_{k\neq j}^{\cN} (1+r_k)\langle f_j|[\mP_j,\mP_k]\varkappa'(a_j-a_k)   +   \frac12 \sum_{k=1}^{\cN}\sum_{l\neq k}^{\cN}(r_k+r_l) \langle f_j|  [\mP_k,\mP_l]\varkappa'(a_k-a_l),
\end{multline}
where we have used $r_j=+1$. We arrive at
\begin{align}\label{eq:ddotaj5}
\ddot{a}_j \langle f_j|=&\;  -2\sum_{k\neq j}^{\cN}(1+r_k)  \langle f_j|\mP_k\mP_j V'(a_j-a_k) \nonumber \\
&\; +2\langle f_j|\dot{\mM}+\frac12\sum_{k=1}^{\cN}\sum_{l\neq k}^{\cN}(r_k+r_l) \langle f_j|  [\mP_k,\mP_l]\varkappa'(a_k-a_l).
\end{align}
The second line vanishes after inserting \eqref{eq:MdotSH}. Using that $(1+r_k)=2$ for $k=1,\ldots,N$ and $0$ otherwise, and multiplying \eqref{eq:ddotaj5} on the right by $|e_j\rangle$, we obtain \eqref{eq:sCMa}.

\bibliographystyle{unsrt}	
\bibliography{BLL6.bib}
\end{document}